\newtheorem{theorem}{Theorem} 
\newtheorem{definition}{Definition} 
\begin{document}
%
\title{ A Quantum Walk-Enabled Blockchain with Weighted Quantum Voting Consensus}
%
%

\author{Chong-Qiang~Ye, Heng-Ji~Li, Jian~Li and Xiao-Yu~Chen
        
\thanks{Chong-Qiang Ye  and Xiao-Yu~Chen are with School of Information and Electrical Engineering, Hangzhou City University, Hangzhou 310015, China, e-mails: (chongqiangye@hzcu.edu.cn, chenxiaoyu@hzcu.edu.cn).}
\thanks{Heng-Ji~Li is with School of Artificial Intelligence, Henan University Zhengzhou, China, e-mail: (lihengji@henu.edu.cn).}
\thanks{Jian Li is with School of Cyberspace Security, Beijing University of Posts and Telecommunications, Beijing 100876, China, e-mail: (jianli@bupt.edu.cn).}
}


%
%


\maketitle

\begin{abstract}
Quantum blockchains provide inherent resilience against quantum adversaries and represent a promising alternative to classical blockchain systems in the quantum era. However, existing quantum blockchain architectures largely depend on entanglement to maintain inter-block connections, facing challenges in stability, consensus efficiency, and system verification. To address these issues, this work proposes a novel quantum blockchain framework based on quantum walks, which reduces reliance on entanglement while improving stability and connection efficiency.  We further propose a quantum consensus mechanism based on a weighted quantum voting protocol, which enables a fairer voting process while reflecting the weights of different nodes. To validate the proposed framework, we conduct circuit simulations to evaluate the correctness and effectiveness of both the quantum walk-based block construction and the quantum voting consensus mechanism. Compared with existing entanglement-dependent approaches, our framework achieves stronger stability and enables simpler verification of block integrity, making it a practical candidate for quantum-era blockchain applications.
\end{abstract}

\begin{IEEEkeywords}
Quantum blockchain,  quantum walk,  weighted quantum voting, circuit simulations.
\end{IEEEkeywords}

%
\IEEEpeerreviewmaketitle

\section{Introduction}

\IEEEPARstart{W}{ith}  the rapid development of quantum computing technology, it has posed a disruptive challenge to traditional computing paradigms and information security systems. Leading technology companies such as Google and IBM have announced significant milestones in achieving quantum supremacy, marking a critical transition from theoretical research to practical implementation~\cite{1,2}. One of the core threats posed by quantum computing lies in Shor's algorithm, which can efficiently solve integer factorization and discrete logarithm problems in polynomial time~\cite{3}. This directly undermines the security assumptions underlying classical public-key cryptographic systems such as RSA and ECC~\cite{4}.

Blockchain technology, as a representative of distributed trust systems, has seen widespread adoption in key sectors including finance, healthcare, and governmental record-keeping, all of which rely heavily on the security of classical cryptography~\cite{5,6}. However, once scalable quantum computers become practical, existing blockchain systems will face catastrophic risks such as transaction tampering and asset theft, threatening the security and stability of the global digital economy. Therefore, exploring blockchain architectures with resistance to quantum attacks is a necessary step toward maintaining the security of future information systems.

To address the potential threats that quantum computing poses to traditional cryptographic primitives and blockchain security, quantum-resistant blockchain technologies have become a prominent research topic~\cite{7}. Current developments in this domain can be broadly categorized into three main approaches:

{\it Post-quantum cryptography-based blockchains}: These systems replace traditional public-key primitives with post-quantum algorithms such as lattice-based cryptography~\cite{8}, hash-based signatures~\cite{9}, and multivariate cryptosystems~\cite{10}. Operating entirely within the classical computational framework, post-quantum cryptography (PQC)-based blockchains significantly enhance resistance against quantum attacks while maintaining compatibility with existing infrastructures~\cite{11}. Owing to this deployability and interoperability, PQC is currently regarded as the most practical route toward quantum-resistant blockchain systems. Nevertheless, the reliance on classical architectures and the relatively large key and signature sizes introduce computational and storage overhead. Thus, while PQC-based solutions play a crucial role in providing security during the transition to the quantum era, they inherently remain bounded by classical cryptographic assumptions and cannot offer the same fundamental security guarantees as blockchain systems leveraging quantum technologies.

{\it Hybrid quantum-cryptographic blockchains}: These architectures selectively incorporate quantum technologies, such as quantum key distribution (QKD)~\cite{12}, quantum signatures~\cite{13}, and quantum teleportation~\cite{14} into classical blockchain frameworks. By enhancing specific components with quantum-level security, hybrid systems can mitigate particular classes of quantum threats. For example, Kiktenko et al.~\cite{15} proposed a quantum-secured blockchain in which classical digital signatures are replaced with QKD-generated keys, thereby enabling decentralized block generation under a quantum-enhanced security model. While such approaches strengthen certain security aspects, they remain fundamentally grounded in classical infrastructures, which constrains scalability and limits their seamless integration into prospective quantum-native blockchain environments.

{\it Quantum blockchains}:  In contrast, quantum blockchains are conceived as fully quantum-native systems, reconstructing fundamental blockchain elements such as storage, verification, and consensus on the basis of quantum principles~\cite{16,17,18,19}. For example, the model proposed by Rajan and Visser~\cite{16} encodes blockchain data via the time evolution of entangled quantum states, thereby offering inherent resistance to tampering. A notable advantage of this approach is its reliance on entanglement, whereby any attempt to modify the stored data collapses the underlying quantum state and reveals the manipulation. By directly exploiting the principles of quantum mechanics, such designs offer strong theoretical security guarantees, even in the presence of quantum-capable adversaries. Continued theoretical investigation of quantum blockchain technologies is therefore of critical importance, as it lays the groundwork for future secure and scalable distributed systems.

Among various approaches to quantum-resistant blockchain design, fully quantum-based systems are considered the most innovative and forward-looking, as they aim to overcome the inherent limitations of classical architectures and offer a foundation for next-generation secure distributed ledgers\cite{20}. Despite its great potential, quantum blockchain faces several significant challenges, including the need for scalable quantum hardware and robust quantum entanglement network infrastructure~\cite{21}. In particular, existing quantum blockchain architectures typically rely on entanglement to maintain correlations between blocks. However, sustaining long-lived, high-fidelity entanglement demands substantial quantum resources, such as quantum repeaters and entanglement swapping, which remain technically difficult to implement in large-scale distributed environments~\cite{22,23}. Moreover, current quantum consensus protocols heavily depend on highly entangled quantum states, which significantly increases quantum resource consumption and complicates practical deployment. Meanwhile, delegated consensus mechanisms based on quantum voting protocols~\cite{24,25,26}, such as quantum delegated proof of stake, remain limited in supporting weighted participation and achieving scalability. Developing quantum consensus frameworks that can accommodate weighted voting and dynamic participation remains a critical challenge.

To address these limitations, there is a pressing need for alternative designs that reduce reliance on entanglement and support the development of practical quantum-secure consensus mechanisms. In this context, quantum walks~\cite{27,28,29} present a promising solution. As a fundamental model in quantum computing, quantum walks are characterized by structured and controllable evolution, enabling the construction of lightweight quantum protocols that minimize the need for demanding global resources, such as extensive entanglement. Their dependence on local operations and single-qubit control makes them particularly compatible with near-term quantum technologies~\cite{30}, and they may provide a practical pathway toward implementing quantum blockchain systems.

Motivated by these advantages, this work proposes a novel quantum blockchain framework that leverages quantum walks to establish inter-block connectivity, thereby reducing entanglement requirements while enhancing system scalability and stability. Then, we present a quantum private permutation protocol based on high-dimensional entangled states and integrated it with a quantum secure multi-party summation protocol to construct a quantum delegated proof-of-stake consensus mechanism that supports differentiated voting weights. To evaluate the proposed framework, we conducted simulation experiments using quantum circuits on a quantum cloud platform, demonstrating its functional correctness and potential for efficient deployment in real-world quantum environments.

The remainder of this paper is organized as follows. Related work is reviewed in Section 2. Section 3 presents the proposed quantum blockchain framework based on quantum walks, along with the quantum consensus protocol. The security analysis and simulation results are presented in Section 4, while Section 5 provides comparisons and discussion. Finally, Section 6 concludes the paper.

\section{Related work }
In this section, we mainly introduce preliminaries and related work about  quantum blockchains and quantum walks.

\subsection{Quantum blockchains}

\subsubsection{Quantum Foundations for Blockchain Systems}
Quantum computing introduces transformative principles for information processing, leveraging quantum bits (qubits) that differ fundamentally from classical bits. Unlike classical bits, which represent either 0 or 1, qubits can exist in a superposition of states, described as $\alpha|0\rangle + \beta|1\rangle$, where $\alpha$ and $\beta$ are complex amplitudes. Additionally, qubits can be entangled, creating strong correlations that persist across distances. These properties enable unparalleled parallelism and security mechanisms unattainable in classical systems~\cite{31}.

In blockchain contexts, quantum properties offer robust solutions for data integrity and security. The no-cloning theorem~\cite{32}, which prohibits the exact replication of an unknown quantum state, can prevent unauthorized data copying, enhancing trust in decentralized ledgers. Quantum entanglement facilitates tamper-evident structures by linking data through correlated states, detectable upon unauthorized interference. Early designs of quantum blockchains often adopted phase encoding techniques, such as rotation gates $R(\theta)$, to embed classical data into quantum states~\cite{19,26}. A representative encoding scheme maps a classical value $p$ to a superposition state of the form $\frac{1}{\sqrt{2}}(|0\rangle + e^{i\theta_p}|1\rangle)$, where $\theta_p$ encodes the payload. To ensure structural coherence across blocks, the phase angles $\theta_{p_i}$ of the $i$-th block are defined relative to the first block’s phase $\theta_{p_1}$ as
\begin{equation}
\theta_{p_i} = \frac{1}{q^{i-1}} \theta_{p_1},
\end{equation}
where $q > 1$ is a scaling factor that maintains a hierarchical relationship among blocks. This encoding preserves data integrity while leveraging quantum properties for verification. Fig. 1 illustrates the basic structure of a quantum blockchain.

\begin{figure}[h]
    \centering
    \includegraphics[width=3.4in]{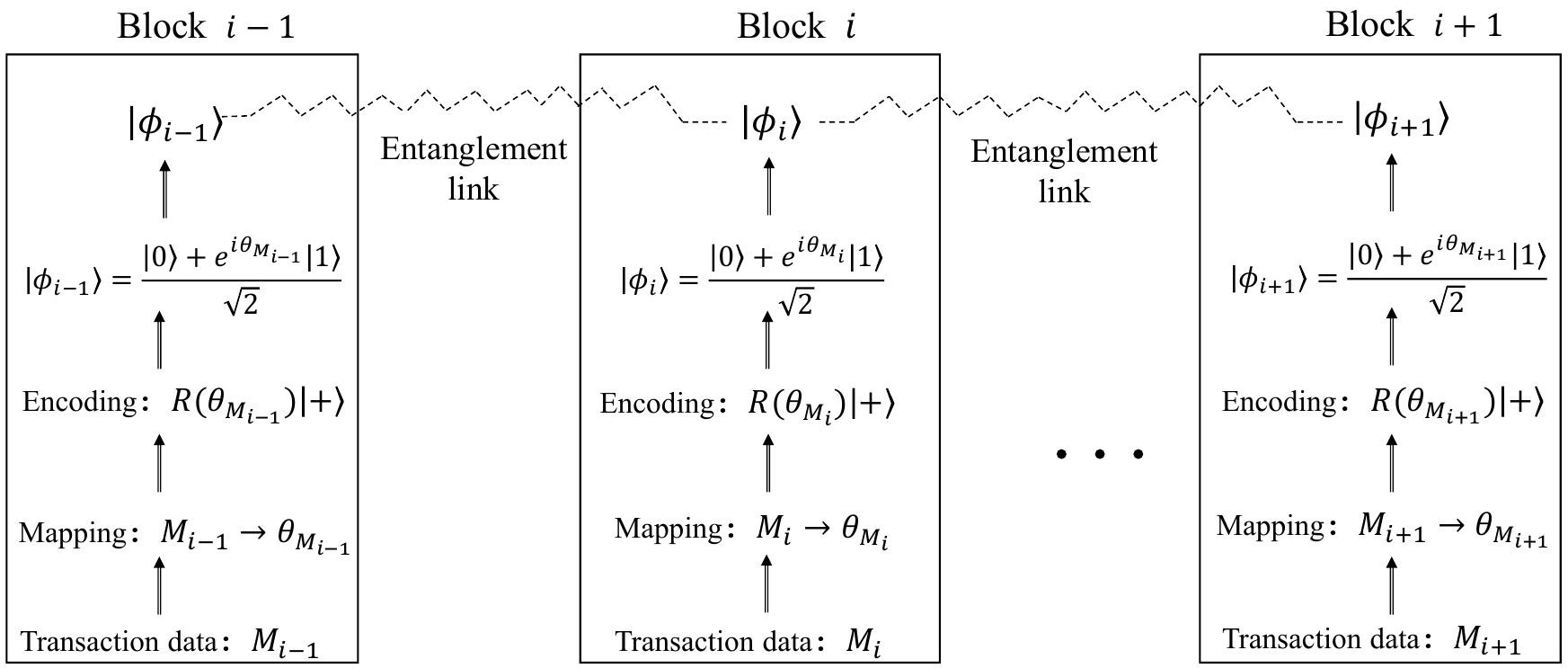}
    \caption{Schematic of a quantum blockchain, showing quantum blocks encoding classical data via phase angles and linked through entangled states.}
    \label{fig:quantum_blockchain}
\end{figure}

In terms of block connectivity, quantum blockchain models typically employ quantum entanglement to replace traditional hash-based linking methods, leveraging quantum correlations to establish block data associations, thereby significantly enhancing system security. For example, Banerjee et al.~\cite{19} introduce a model based on weighted hypergraph states, which provide a flexible structure for distributed ledgers. In this model, blockchain data are encoded into the phases of quantum states via specific unitary operations, and blocks are linked through quantum entanglement. Quantum nonlocality is then employed to ensure tamper resistance. In addition to hypergraph-based designs, other approaches employ multi-particle entangled states, such as GHZ states~\cite{16, 17,18} to preserve global coherence across the ledger, thereby ensuring consistency in decentralized networks.


Despite their theoretical promise, entanglement-based architectures face significant challenges. Maintaining long-range entanglement in distributed systems requires quantum repeaters and entanglement swapping protocols, which are technologically demanding~\cite{33}. Furthermore, entanglement is highly susceptible to noise and decoherence, limiting the scalability of such systems with current hardware. Consequently, entanglement-dependent blockchains are unlikely to be practical in the near term.

\subsubsection{Quantum Consensus Mechanisms} 
In classical systems, consensus mechanisms such as Proof-of-Work (PoW) and Proof-of-Stake (PoS) rely on computational or economic incentives to achieve consensus. Quantum consensus mechanisms, however, leverage quantum mechanical principles to enhance security, efficiency, and scalability in quantum blockchain systems.

Quantum consensus protocols\cite{34,35,36,37} often exploit the no-cloning theorem and quantum entanglement to secure agreement processes. For instance, Qu et al.~\cite{35} proposed an $n$-party quantum detectable Byzantine agreement protocol based on GHZ states to achieve data consensus in quantum blockchains. Their protocol exploits the non-locality of GHZ states, distributing entangled resources across nodes to verify node integrity and ensure robust consensus. Similar to classical Byzantine agreement protocols, quantum Byzantine approaches encounter scalability challenges as the number of participating nodes increases. Specifically, both the communication complexity and computational overhead grow with network size, which may constrain the protocol's applicability in large-scale systems.

 Another representative work by Li et al.\cite{26} introduced a quantum delegated proof of stake (QDPoS) mechanism that integrates high-dimensional multipartite entangled states with quantum Fourier transforms to enable decentralized voting. Their protocol supports the quantum representation of three voting outcomes—approval, abstention, and disapproval. While Li et al.'s work offers a promising direction for incorporating quantum techniques into delegated consensus, their protocol adopts a uniform voting model, treating all participants as having equal voting power. This abstraction simplifies the protocol but does not fully reflect the weighted voting characteristics of practical DPoS systems, where stakeholders typically possess varying degrees of influence. The lack of a native quantum representation for voting weights may thus limit the protocol’s ability to model realistic delegate selection processes. To bridge this gap, it is important to explore how weighted voting can be integrated into QDPoS within a quantum-secure framework. 



\subsection{Quantum walks}

Here, we briefly introduce the fundamental model and concepts of quantum walks.
Quantum walk-based systems can be categorized into two distinct types: discrete-time quantum walks (DTQWs), characterized by evolution in discrete steps, and continuous-time quantum walks (CTQWs), governed by continuous evolution under a fixed Hamiltonian. In this work, we focus on constructing a quantum blockchain based on the discrete-time quantum walk model.

\subsubsection{Two-direction discrete quantum walks on a circle}
A discrete quantum walk comprises two subsystems~\cite{38}: the walker system, which describes the particle's state in position space, and the coin system, which determines the particle's evolution direction at each step. The corresponding Hilbert spaces for the walker and the coin are denoted as $\mathcal{H}_w$ and $\mathcal{H}_c$, respectively. The total state space of the quantum walk system is given by the tensor product of these two spaces:
$\mathcal{H} = \mathcal{H}_w \otimes \mathcal{H}_c$.
A two-direction discrete quantum walk on a circle system can be expressed as:
\begin{equation}
|\psi\rangle = \sum_{x \in \mathbb{Z}\cap [0, d)} \sum_{c \in \{0,1\}} \psi_{x,c} |x\rangle_w \otimes |c\rangle_c,
\end{equation}
where $\psi_{x,c} \in \mathbb{C}$ represents the complex amplitude of the state $|x\rangle_w \otimes |c\rangle_c$. 

The evolution operator for the quantum walk state is defined as:
\begin{equation}
U = S (I_w \otimes C),
\end{equation}
where $C$ is the coin operator acting on the coin space, and $S$ is the controlled shift operator acting on the walker-coin system.  According to Ref.\cite{38}, $C$ is defined as 
\begin{equation}
C=\begin{pmatrix}
e^{i\xi}\ \text{cos}\ \theta & e^{i\eta}\ \text{sin}\ \theta\\
e^{-i\eta}\ \text{sin}\ \theta & -e^{-i\xi}\ \text{cos}\ \theta\\
\end{pmatrix}.
\end{equation}
For simplicity, we set $\xi = 0$, $\theta = \frac{4}{\pi}$, and $\eta = 0$ in this work. Under these parameter choices, the coin operator $C$ reduces to a Hadamard gate, i.e.,  $ C=H=\frac{1}{\sqrt{2}} \left( \begin{array}{cc}
1 & 1 \\
1 & -1 \\
\end{array}\right)$. $S$ is defined as:
\begin{equation}
S =  T_1\otimes |0\rangle_c \langle 0|_c + T_{-1}\otimes |1\rangle_c \langle 1|_c ,
\end{equation}
where $T_1$ and $T_{-1}$ are the shift operators over a $d$-dimensional position space:
\begin{equation}
T_1=\sum^{d-1}_{x=0}|x\oplus 1\rangle \langle x|, \quad  T_{-1}=\sum^{d-1}_{x=0}|x\oplus -1\rangle \langle x|.
\end{equation}
Here the symbol $\oplus$ represents the modulo $d$ addition. 
\begin{figure}[h]
    \centering
    \includegraphics[width=3.3in]{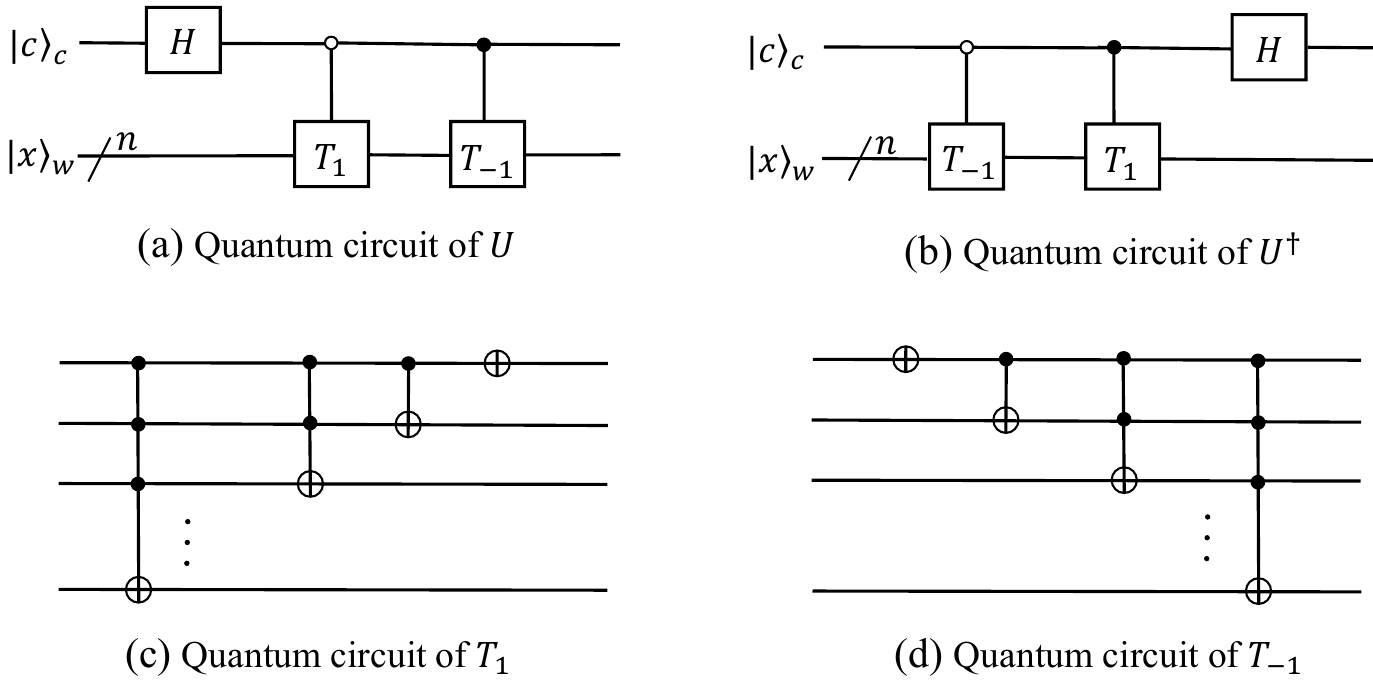}
    \caption{Circuit diagram of quantum walk evolution operation.}
\end{figure}
After defining the evolution operator $U$ and its components, the inverse evolution operator is correspondingly given by $U^\dagger = (I_w \otimes C^\dagger) S^\dagger$.  The quantum circuits for $U$ and $U^\dagger$ are shown in Fig. 2(a) and (b), with the detailed constructions of the shift operators $T_1$ and $T_{-1}$ provided in Fig. 2(c) and 2(d).
Accordingly, the evolution of the quantum walk can be described as follows. Given an initial state $|\psi_0\rangle$, the quantum state after $t$ steps of the quantum walk evolves to:
\begin{equation}
|\psi_t\rangle = U^t |\psi_0\rangle.
\end{equation}
The intrinsic sequential dependency of a quantum walk is similar to blockchain linkage. By encoding each block’s data into its corresponding step operator, the state obtained after $t$ steps depends on the initial state (i.e.,  the data of prior blocks), producing a blockchain‑like chained structure (see Fig. 3).
\begin{figure}[h]
    \centering
    \includegraphics[width=3.4in]{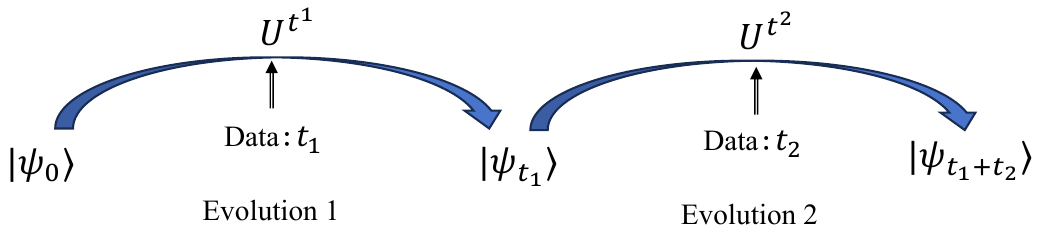}
    \caption{Schematic diagram of chain-structured state evolution based on quantum walk.}
\end{figure}

\subsubsection{Quantum hash function based on quantum walks}
Quantum hash functions (QHFs) \cite{39,40,41,42} have garnered significant attention for their potential to enhance security in quantum cryptographic protocols. Among various constructions, QHFs based on quantum walks are particularly notable for their intrinsic randomness and pronounced sensitivity to initial conditions and control parameters \cite{41,42}.  By exploiting the inherent sensitivity and dynamic properties of quantum walks, these constructions can achieve strong cryptographic properties such as collision resistance and unpredictability.

For example, Ref.\cite{42} presents a QHF construction based on a discrete-time quantum walk involving two particles on a cycle. In this scheme, the input message, represented as a classical binary string, determines the sequence of coin operators applied at each step of the quantum walk evolution. Specifically, two different coin operators are defined: the Grover coin $C_0$ and an alternative coin $C_1$. 
\begin{equation}
\resizebox{.9\hsize}{!}{$
C_0 = \frac{1}{2}
\begin{pmatrix}
-1 & 1 & 1 & 1 \\
1 & -1 & 1 & 1 \\
1 & 1 & -1 & 1 \\
1 & 1 & 1 & -1 \\
\end{pmatrix},\ 
C_1 = \frac{1}{2}
\begin{pmatrix}
1 & 1 & 1 & 1 \\
1 & 1 & -1 & -1 \\
1 & -1 & 1 & -1 \\
1 & -1 & -1 & 1 \\
\end{pmatrix}.
$}
\end{equation}
For each bit of the input message, the corresponding coin operator is chosen, typically $C_0$ for 0 and $C_1$ for 1. The construction process of the QHF is as follows\cite{42}:

 (1) Select the parameters $n$ (the number of nodes on the cycle) and $(\alpha, \beta, \chi, \delta)$, where $\alpha, \beta, \chi, \delta$ are the amplitudes of the initial coin state $|c\rangle = \alpha|00\rangle + \beta|01\rangle + \chi|10\rangle + \delta|11\rangle$. 
 
(2) Under the control of the message, run the two-particle discrete-time quantum walk on a cycle. The evolution of the quantum walker is governed by the sequence of coin operators determined by the input message.

(3) After a fixed number of steps, extract the probability distribution of the walker's position. All values in the resulting probability distribution are multiplied by $10^8$ and reduced modulo 256 to form a binary string as the secret key $hv$, i.e., the hash value.

This approach ensures that even a slight change in the input message leads to a significant alteration in the quantum walk evolution and the resulting output.

 
\section{Quantum Blockchain Construction via Quantum Walks}

In this section, we propose a novel quantum blockchain architecture based on discrete-time quantum walks. Each quantum block consists of multiple quantum walk states, with initial states determined by the hash of the previous block. Transactions and metadata are encoded into the quantum walk steps, which evolve to form the block state, inherently linking it to both the previous block and the current data, thus ensuring blockchain integrity. A weighted quantum voting-based delegated proof of stake mechanism is then used to select representatives, achieve consensus, and validate candidate blocks. Upon consensus, the verified transactions are incorporated into quantum blocks and synchronized across all nodes. The workflow for each phase is detailed below.


\subsection{Quantum block structure construction}
 In this phase, classical block information is encoded into quantum states via discrete-time quantum walks, as illustrated in Fig. 4. Each quantum block body contains $n$ independent discrete-time quantum walk states, denoted as $|\psi_1\rangle,|\psi_2\rangle,\dots,|\psi_n\rangle$. The block data is first mapped to a set of walk step counts, which are then used to drive the quantum walk evolution for each state. The full procedure for initialization, encoding, and evolution is as follows.

\subsubsection{Initialization of position states}  For the $i$-th block $B_i$, the initial position states of the quantum walkers are derived from the hash value $hv_{i-1}$ of the previous block $B_{i-1}$'s metadata, which includes transaction data, timestamp, and other relevant fields. Specifically, the hash $hv_{i-1}$ is divided into $n$ segments, each of length $L=\text{log}_2 M $, where $M$ denotes the size of the position space. Each segment is then interpreted as an integer $x^i_j\in [0,M-1]$,  serving as the initial position for the $j$-th walker. Consequently, the initial quantum walk states are given by:
\begin{equation}
|\psi^i_j\rangle =  \sum_{x} \sum_{c} |x^i_j\rangle\otimes|c^i_j\rangle, \quad \text{for}\  j=1,2,\dots,n.
\end{equation}

\subsubsection{Mapping of block information}  
To encode the current block's data into quantum walk dynamics, the transaction data and timestamp of block $B_i$ are first concatenated and divided into $n$  equal-length segments. Each segment is interpreted as a binary string and then converted into a non-negative integer $s^i_j$, for $j = 1, 2, \dots, n$. These integers are used to determine the number of walk steps for each quantum walker.

To introduce randomness and prevent deterministic encoding, each step count $t^i_j$ is computed as:
\begin{equation}
t^i_j = (s^i_j + r^i_j) \bmod T + 1,
\end{equation}
where $r^i_j $ is a randomization factor derived from the lower-order bits of the block’s timestamp, and $T$ is the system-defined upper bound for the number of walk steps.

\begin{figure}[h]
    \centering
    \includegraphics[width=3.5in]{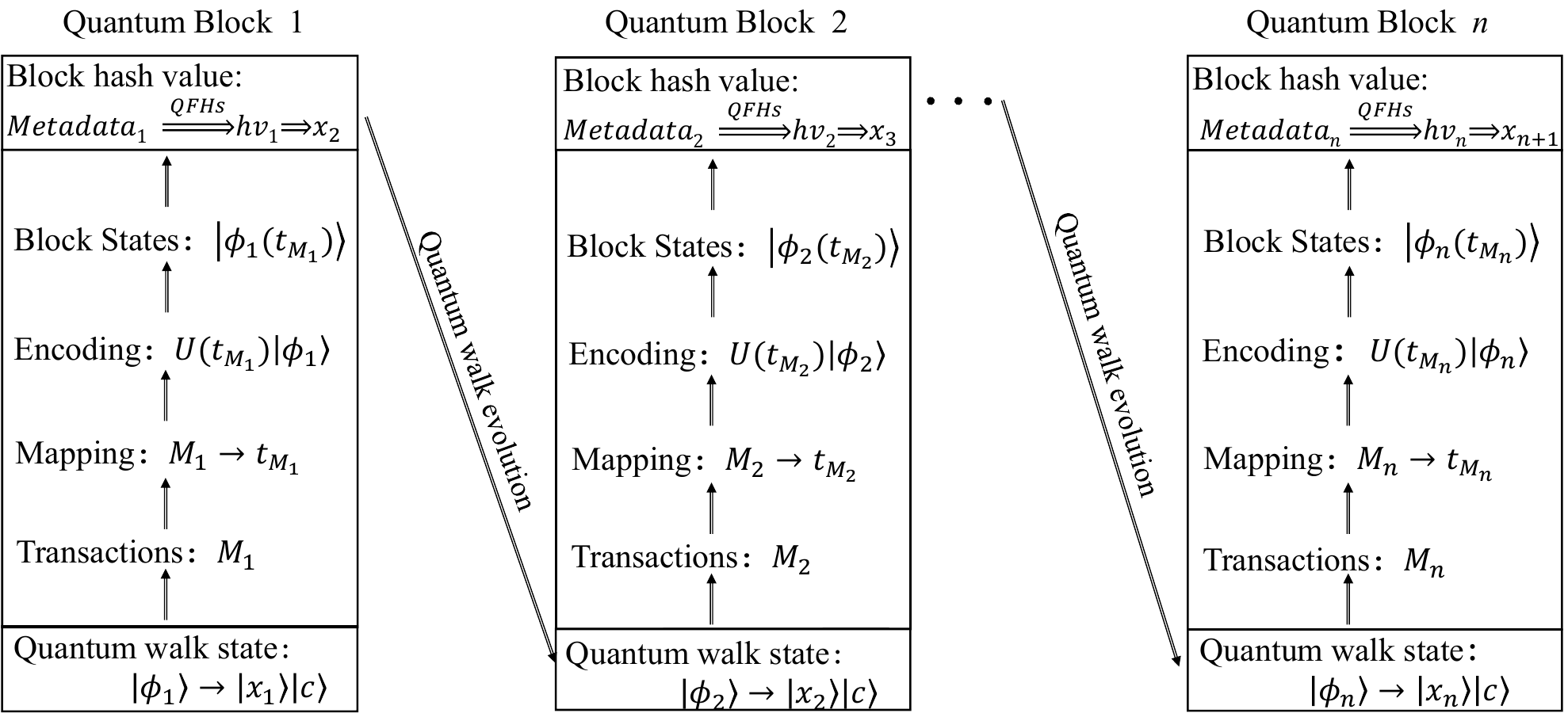}
    \caption{Schematic of the quantum block structure based on quantum walks. Each block corresponds to a quantum state $|\phi\rangle$, composed of $n$ quantum walk states $\lvert\psi_1\rangle, \lvert\psi_2\rangle, \dots, \lvert\psi_n\rangle$. The initial positions of these walks are derived from the hash of the preceding block, while the number of walk steps encodes the information of the current block. }
\end{figure}
 

\subsubsection{Data embedding via quantum walk evolution}  
 
Once the step counts $t^i_j$ are determined, each quantum walker evolves under the discrete-time quantum walk operator accordingly. Specifically, for the $j$-th walker in block $B_i$, the initial state $|\psi^i_j\rangle $ evolves through $t^i_j $ steps via repeated application of the unitary walk operator $ U $, resulting in:
\begin{equation}
|\psi^i_j(t^i_j)\rangle = U^{t^i_j} |\psi^i_j\rangle.
\end{equation}
This evolution process embeds the current block's transaction and timing information into the resulting quantum walk states.

The complete set of evolved states \( \{|\psi^i_j(t^i_j)\rangle\}_{j=1}^n \) collectively represents the quantum body of block \( B_i \). These states are intricately linked with both the previous block (via the initial positions derived from the previous hash) and the current block data (via the encoded steps), ensuring forward and backward integrity in the blockchain structure.

\subsubsection{Computation of block hash} Finally, a fixed-length hash value $ hv_i $ is computed from the current block’s transaction data and timestamp using a quantum hash function \cite{42}. The resulting hash $ hv_i$ is stored in the block header and serves as the unique identifier for block $B_i $. Importantly, $hv_i $ also acts as the basis for initializing the quantum position states of the next block $B_{i+1}$, thereby maintaining the forward-linking structure of the blockchain. This approach ensures quantum-level integrity between consecutive blocks, ensuring that each block is cryptographically linked to its predecessor through quantum walk-based evolution and hash generation.

\subsection{Quantum consensus mechanism  }
In this subsection, we propose a novel quantum delegated proof of stake consensus (QDPoS), which integrates a weighted quantum voting protocol into a delegated proof-of-stake framework. Unlike prior schemes that employ quantum voting to select representatives, our method introduces weighted voting, where each participant's voting power is proportional to their stake or contribution in the system. This enables a more fine-grained and fair representation of user influence in the consensus process. 
The core quantum resource employed in this protocol is the $n$-dimensional $n$-particle Cat state, expressed as\cite{43}:
\begin{equation}
|\Phi(\vartheta_1,\dots,\vartheta_n)\rangle = \frac{1}{\sqrt{n}} \sum_{l=0}^{n-1} \omega^{l \vartheta_1} |l, l+\vartheta_2, \dots, l+\vartheta_n\rangle,
\end{equation}
where $\omega=e^{\frac{2 \pi i}{n}}$ and the label $\vartheta_l \in [0,n-1]$. By setting $\vartheta_1 = 0$ and ensuring the remaining parameters $(\vartheta_2, \dots, \vartheta_n)$ are mutually distinct, the state exhibits two key measurement properties, as demonstrated in Ref.~\cite{43}: (1) In the computational basis, the measurement outcomes for the $n$ qudits are all unique. (2) In the Fourier basis, the sum of all measurement outcomes is zero, modulo $n$. Based on these properties, the proposed QDPoS consensus mechanism proceeds through the following phases.

\subsubsection{Initialization and parameter setup} 
Suppose there are $m$ candidates competing to be elected as representative nodes, and the goal is to select $r$ representatives from them, denoted as $R_1, R_2, \dots, R_r$. The set of eligible voters is given by $ \{V_0, V_1, \dots, V_{n-1}\}$, where each voter $V_l$ is associated with a predefined voting weight $w_l \in \mathbb{R}^{+} $, for $l = 0, 1, \dots, n-1$. These weights reflect the relative stake or contribution of each voter within the network, and are normalized and discretized to determine the number of votes each participant can cast. Formally, let $W = \sum_{l=0}^{n-1} w_l$ denote the total system-wide stake. Then, the effective voting share of voter $V_l$ is given by $
\tilde{w}_l = \left\lfloor \frac{w_l}{W} \cdot T_v \right\rfloor,
$
where $T_v $ is a fixed upper bound on the total number of votes to be distributed, and $ \tilde{w}_l \in \mathbb{Z}^{+} $ represents the quantized voting weight of $ V_l$. 

\subsubsection{Privacy index distribution}
At this stage, quantum entanglement is utilized to distribute voter privacy indices, as detailed in the following steps.

{\it Step 1}: Each candidate $C_k\ (k=1,2,\dots,m)$ first prepares $1+\delta$ group Cat states $|\Phi(0,\vartheta^k_2,\dots,\vartheta^k_n)\rangle$, with each parameter set $(\vartheta^k_2, \dots, \vartheta^k_n)$ mutually distinct both within and across groups. Each group consists of two identical copies, and the $l$-th particle of each state is assigned to the voter $V_l$. 

{\it Step 2}: To ensure secure and anonymous index distribution, each candidate mixes the prepared states with decoy particles randomly selected from both the computational basis and the Fourier basis. The resulting sequences are then transmitted to the corresponding voters. 

{\it Step 3}: After receiving the particle sequences, each voter first performs a round of security verification using the inserted decoy particles to check the security of the transmission channel. If the detected error rate is below a predefined threshold, the remaining particles are considered valid and retained for subsequent steps in generating private voting indices.

{\it Step 4}:  To ensure that the candidate has correctly prepared the Cat states, all voters collaboratively select $\delta$ groups from the $1+\delta$ prepared Cat state pairs for verification. The candidate $C_k$ is then required to publicly disclose the parameters used in preparing each selected Cat state, i.e., $(\vartheta^k_2,\vartheta^k_3,\dots,\vartheta^k_n)$. For each selected group, which contains two identical copies of the same Cat state, the voters perform a computational-basis measurement on one copy and a Fourier-basis measurement on the other. According to the properties of Cat states: In the computational basis, all voters should observe mutually distinct outcomes; In the Fourier basis, the sum of measurement results (modulo $n$) should be zero.
If the observed error rate exceeds a predefined threshold, the protocol is aborted. Otherwise, the process proceeds to the next step.

{\it Step 5}: For the remaining unused group of Cat states, each voter randomly selects one copy and performs a computational-basis measurement on their respective particle. Due to the intrinsic property of the Cat state, each voter obtains a unique measurement outcome, which serves as their private index numbers. For instance, suppose the particles distributed to the voters originate from the same Cat state: $|\Phi(0,1,2,\dots,n-1)\rangle=\frac{1}{\sqrt n}\sum^{n-1}_{l=0}|l,l+1,\dots,l+n-1\rangle$. When each voter measures their particle in the computational basis, the outcomes are $\{ |l\rangle, |l+1\rangle,\dots,|l+n-1\rangle \}$, which consists of $n$ distinct values. Thus, for each candidate $C_k$, every voter receives a unique index number, denoted as $N^k_0, N^k_1, \dots, N^k_{n-1}$, corresponding respectively to voters $V_0, V_1, \dots, V_{n-1}$.

\begin{figure*}[!htp]
\centering\includegraphics[width=6.2in]{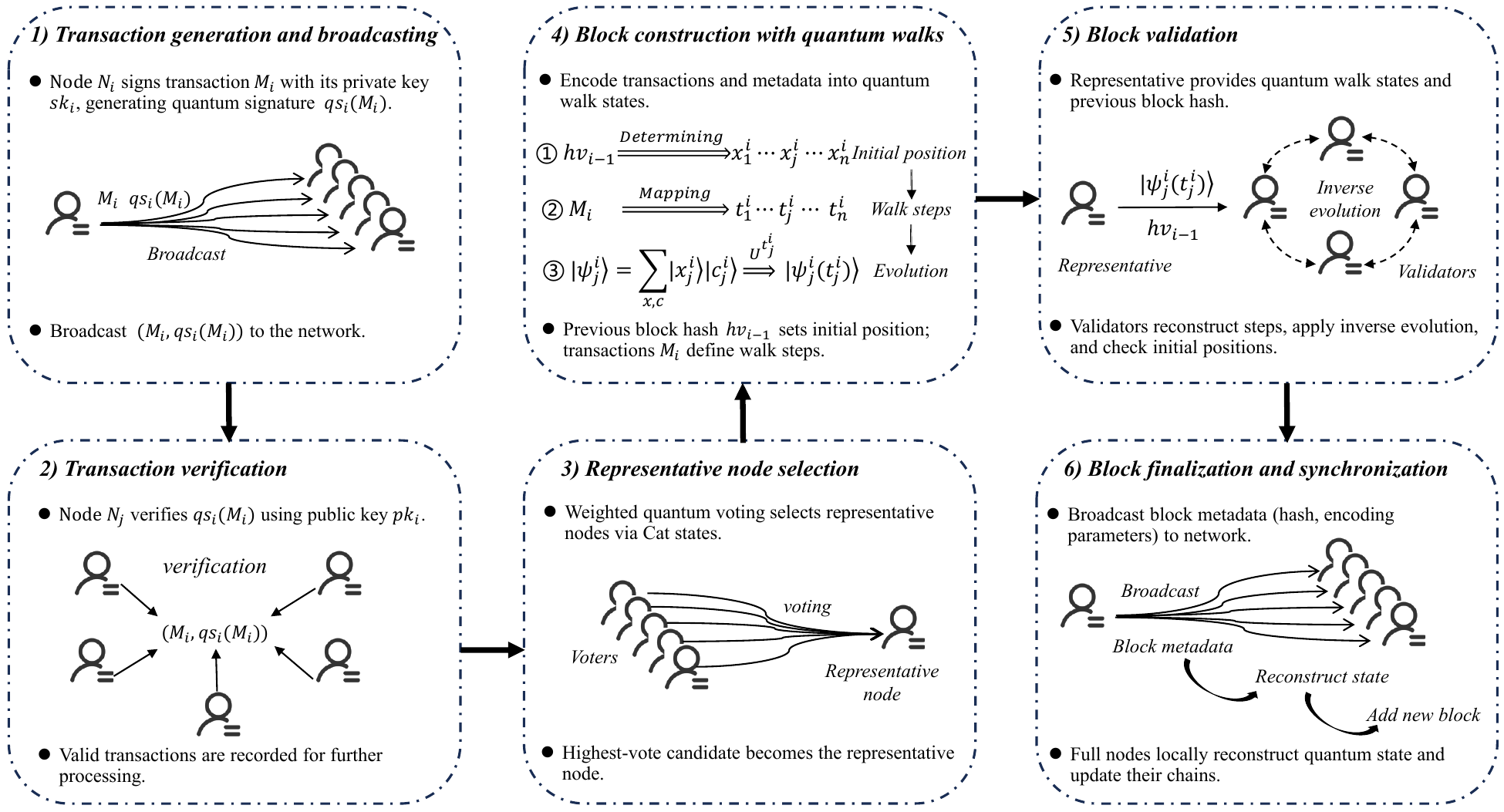}
\caption{ Workflow of the proposed quantum blockchain. }
\label{fig:5}       
\end{figure*}

\subsubsection{Secure quantum vote aggregation}
In this phase, each voter encodes their voting choice based on the previously assigned private index, and the candidate performs the final vote aggregation.  The detailed procedure is as follows.

{\it Step 1}: Similar to the index distribution phase, candidate $C_k$ prepares $n+\delta$ group Cat states. However, unlike in the previous phase, where each voter obtained a unique index, here the goal is to construct a quantum ballot box that accommodates weighted voting. To this end, each Cat state is initialized in the following form:
\begin{equation}
\begin{aligned}
|\Phi^\prime(0,0,\dots,0)\rangle  =\frac{1}{\sqrt d}\sum^{d-1}_{l^\prime=0}|l^\prime,l^\prime,\dots,l^\prime\rangle,
\end{aligned}
\end{equation}
where $d$  is the dimension of each qudit. To accommodate all possible voting weights and ensure correctness under modular arithmetic, the dimension $d$ must satisfy $d \ge T_v$. 

{\it Step 2}: $C_k$ divides each prepared Cat state into $n$ sequences and distributes them to the $n$ voters. To defend against potential eavesdropping attacks, decoy states are randomly inserted into each distribution sequence. As in the privacy index distribution phase, voters first perform channel security checks using the decoy states and then randomly select $\delta$ groups for state verification to ensure that the candidate has correctly prepared the states according to the protocol.

{\it Step 3}: For the remaining $n$ verified states, each voter selects one copy from each group and performs a Fourier-basis measurement on the received qudit. As a result, each voter obtains a sequence of $n$ measurement outcomes. These results can be organized into a matrix that represents a distributed quantum ballot box shared among all voters:
\begin{equation}
\scalebox{0.85}{
$\begin{pmatrix}
r^k_{0,0} &\dots  &  r^k_{0,l} &\dots & r^k_{0,n-1}\\
\vdots &\vdots  &  \vdots  &\vdots & \vdots\\
r^k_{g,0} &\dots  &  r^k_{g,l} &\dots & r^k_{g,n-1}\\
\vdots &\vdots  &  \vdots  &\vdots & \vdots\\
r^k_{n-1,0} &\dots  &  r^k_{n-1,l} &\dots & r^k_{n-1,n-1}
\end{pmatrix}$
},
\end{equation}
where $r^k_{g,l}\in\{0,1,\dots,d-1\}$ and $g=0,1,\dots,n-1$. Note that the $l$-th column $(r^k_{0,l}, r^k_{1,l}, \dots, r^k_{n-1,l})$ represents the outcomes obtained by voter $V_l$ from each of the $n$ Cat states, while the $g$-th row $(r^k_{g,0}, r^k_{g,1}, \dots, r^k_{g,n-1})$ corresponds to the measurement results for the $g$-th Cat state across all voters. According to the properties of the Cat state under Fourier-basis measurement, the outcomes from each row satisfy the condition:
\begin{equation}
\sum^{n-1}_{l=0} r^k_{g,l} \ \text{mod}\ d=0.
\end{equation}

{\it Step 4}: Then, each voter casts their vote by updating the ballot box according to their previously obtained private index $N^k_0, N^k_1, \dots, N^k_{n-1}$. Specifically, voter $V_l$ encodes their vote $v^k_l\in\{0,1,\dots,\tilde{w}_l\}$ by modifying the value of $(r^k_{0,l}, r^k_{1,l}, \dots, r^k_{n-1,l})$. The updated rule as follows:
\begin{equation}
r^{\prime k}_{g,l}=\left\{
\begin{aligned}
& r^k_{g,l}+v^k_l \ \text{mod}\ d      \ & \text{if} \ g=N^k_l\\
&  r^k_{g,l}\    &  \text{if} \ g\neq N^k_l
\end{aligned}
\right..
\end{equation}

{\it Step 5}:  After the voting concludes, all voters simultaneously publish their updated ballot values. These values are then used to reconstruct the updated ballot matrix for candidate $C_k$:
\begin{equation}
\scalebox{0.85}{
$\begin{pmatrix}
r^{\prime k}_{0,0} &\dots  &  r^{\prime k}_{0,l} &\dots & r^{\prime k}_{0,n-1} &result^k_{0}\\
\vdots &\vdots  &  \vdots  &\vdots &  &\vdots\\
r^{\prime k}_{g,0} &\dots  &  r^{\prime k}_{g,l} &\dots & r^{\prime k}_{g,n-1} &result^k_g\\
\vdots &\vdots  &  \vdots  &\vdots &  &\vdots\\
r^{\prime k}_{n-1,0} &\dots  &  r^{\prime k}_{n-1,l} &\dots & r^{\prime k}_{n-1,n-1} &result^k_{n-1}
\end{pmatrix}$
},
\end{equation}
where $result^k_g=\sum^{n-1}_{l=0} r^{\prime k}_{g,l}\ \text{mod}\ d$. According to formula (15), it can be easily deduced that
\begin{equation}
result^k_g=\sum^{n-1}_{l=0} r^{\prime k}_{g,l}\ \text{mod}\ d= \sum^{n-1}_{l=0} r^k_{g,l}+v^k_l  \ \text{mod}\ d= v^k_l .
\end{equation}
By this stage, every voter has successfully cast their vote for each candidate $C_k$.

\subsubsection{Representative node selection}
Each candidate $C_k$ computes their total number of votes by summing the individual row results in their vote matrix:
$
result^k = \sum_{g=0}^{n-1} result^k_g.
$
The top $r$ candidates with the highest totals are selected as representative nodes. To ensure correctness, each voter can verify whether their own vote has been accurately counted. The voting weight of voter $V_l$ is $\tilde{w}_l$. It is assumed that all of the votes are cast for the candidate at index $l$. Then $V_l$ checks:
\begin{equation}
\sum_{k=1}^{m} \text{result}^k_l = \tilde{w}_l.
\end{equation}
If the equation holds, $V_l$ can be confident that his vote has been correctly included in the final tally.

\subsubsection{Block production and validation}
 After the representative nodes are elected, they are arranged in a random order to sequentially perform block production. Let $f_t$ be the fixed time interval for block generation. When it is a representative node’s turn, the node collects all valid transactions that occurred during the time window $f_t$. These transactions are verified using quantum digital signature techniques~\cite{44,45,46} to ensure their authenticity and integrity. Once verified, the transactions are packaged into a new block (see Section 3.1 for details on transaction data encoding). If the representative node fails to produce a block within the designated time, the block is considered invalid, and the transactions are forwarded to the next node in the sequence.

To ensure the correctness of block production, designated validator nodes are responsible for subsequent verification. For the $i$-th new block, the representative node is required to send its quantum walk states ${|\psi^i_1(t^i_1)\rangle, |\psi^i_2(t^i_2)\rangle, \dots, |\psi^i_n(t^i_n)\rangle}$ along with the hash value $hv_{i-1}$ of the previous block. Each validator determines the quantum walk steps $\{t^i_1, t^i_2,\dots,t^i_n\}$ based on the transaction data and timestamp of the new block, and applies the inverse evolution operator $U^{-t^i_j}$ to the received quantum states, thereby returning the quantum walker to its initial position. By measuring and extracting the position information $x^i_j$, the validator compares it with the expected initial position encoded by $hv_{i-1}$. If the measurement result matches the expected value, the block is considered valid.

\subsubsection{Incentive and re-election mechanism}

To motivate participation and honest behavior, nodes are rewarded for successful voting and block outs. These rewards can be issued through blockchain transactions, usually in the form of native tokens or virtual currencies.  If a representative node behaves dishonestly or fails to perform its duties, it may be penalized or removed. After a complete round of block production, a new round of quantum voting can be initiated to re-elect representative nodes. Honest nodes may be re-elected, while malicious or idle nodes are excluded from future rounds.

 \subsection{ Workflow of the proposed quantum blockchain}
The previous two sections have outlined the structure and consensus mechanism of the quantum blockchain. This section further provides an overview of the overall workflow of the proposed framework, as shown in Fig. 5. The workflow leverages quantum walks to construct the blockchain architecture and establish connections between blocks, thereby ensuring block integrity. Meanwhile, consensus is achieved through a weighted quantum voting-driven mechanism. A brief description of each phase is presented below.

\subsubsection{Transaction generation and broadcasting} Node $N_i$ uses its private key $sk_i$ to generate a quantum signature $qs_i(M_i)$ for the transaction $M_i$. Subsequently, it broadcasts the tuple $(M_i, qs_i(M_i))$, which includes both the transaction and its signature, to the entire blockchain network.

\subsubsection{Transaction verification}  Upon receiving the transaction, any node $N_j$ utilizes the public key $pk_i$ of node $N_i$ to  verify the signature $qs_i(M_i)$, thereby assessing the validity and legitimacy of the transaction. If the verification passes, node $N_j$ records the transaction for further processing.

\subsubsection{Representative node selection}
To reach consensus, a weighted quantum voting-based delegated proof of stake mechanism is employed to select representative nodes. Leveraging the properties of Cat states, each voter can securely cast their vote according to their respective weight, while candidates are able to aggregate the votes without revealing individual voting choices. The candidate with the highest number of votes is elected as the representative node and is responsible for generating the new block in the next round.

\subsubsection{Block construction with quantum walks} 
Once representatives are elected, they are ordered randomly to produce blocks in sequence. For the $i$-th block, during its assigned time slot, the representative node collects validated transactions $M_i$ and encodes them, along with associated metadata, into the discrete-time quantum walk framework. Specifically, the hash of the previous block $hv_{i-1}$ determines the initial positions of the walk states, i.e.,
\begin{equation}
hv_{i-1} \;\mapsto\; \{\,|x^i_1\rangle,|x^i_2\rangle,\dots,|x^i_n\rangle \,\},
\end{equation}
where $|x^i_j\rangle$ denotes the $j$-th position state. Meanwhile, the verified transactions $M_i$ are partitioned into $n$ components,
\begin{equation}
M_i \;\mapsto\; \{\,t^i_1, t^i_2, \dots, t^i_n\,\},
\end{equation}
as defined in Eq.~(10). During the quantum walk evolution, the states $\{|\psi^i_j\rangle\}$ are updated by embedding the corresponding transaction components, resulting in
\begin{equation}
|\psi^i_j\rangle \;\xrightarrow{U(t^{i}_{j})}\; |\psi^i_j(t^i_j)\rangle, \quad j=1,2,\dots,n.
\end{equation}
The final set of evolved states constitutes the candidate block, inherently binding the new block to both the past history (through $hv_{i-1}$) and the current data (through $M_i$).


\subsubsection{Block validation}  The representative node provides the quantum walk states and the previous block’s hash to validators. Each validator reconstructs the expected quantum walk steps from the announced transactions and metadata, then applies inverse evolution to test whether the received states collapse to the correct initial positions. If the reconstructed results match, the block is deemed valid. Validators then aggregate their results using a quantum summation protocol, and if a sufficient majority (e.g., two-thirds) approve, the block is accepted. Otherwise, the block is discarded.

\subsubsection{Block finalization and synchronization}
Once validated, the block's classical metadata, such as the block hash and encoding parameters, is broadcast to the network. With transaction data and signatures already verified, full nodes locally reconstruct the quantum state of the new block and update their chains, ensuring system-wide consistency. 

\section{Security analysis and simulation}

\subsection{Security analysis} 
In this section, we analyze the security properties of the proposed quantum blockchain, focusing on its resilience against both classical and quantum adversaries. Our goal is to demonstrate that the system guarantees (i) correctness of transactions, (ii) immutability of blocks, and (iii) fairness of consensus, even in the presence of malicious participants. These properties are analyzed with respect to the capabilities of potential adversaries, which we formally define below. 

{\it Adversarial model:} 
We assume a quantum polynomial-time (QPT) adversary $\mathcal{A}$ with the following capabilities:
\begin{itemize}
    \item $\mathcal{A}$ may attempt to forge transactions by producing counterfeit signatures.  
    \item $\mathcal{A}$ may tamper with blocks by modifying encoded quantum walk states or replaying past states.  
    \item $\mathcal{A}$ may bias consensus by manipulating votes or producing conflicting chains.  
\end{itemize}
The adversary is bounded by the principles of quantum mechanics, including the no-cloning theorem and measurement disturbance.

\subsubsection{Transaction correctness and authenticity}

In our scheme, the correctness and authenticity of transactions are guaranteed by the quantum digital signature mechanism, which ensures that each transaction is generated by the legitimate sender and the content has not been tampered with.

\begin{definition}
A transaction ledger is considered correct if every transaction $M_i$ recorded in it has been validated according to the protocol rules, which requires that $M_i$ was generated and signed by the legitimate owner of the originating account.
\end{definition}

\begin{theorem}
Let node $N_i$ use a quantum signature scheme. For any QPT adversary $\mathcal{A}$, the probability that $\mathcal{A}$ produces a fraudulent tuple $(M', qs_i(M'))$ that passes verification under public key $pk_i$ is negligible in the security parameter $\kappa$.
\end{theorem}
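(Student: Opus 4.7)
The plan is to proceed by a standard reduction argument: I would show that any QPT adversary $\mathcal{A}$ producing a valid forgery $(M', qs_i(M'))$ can be converted into an algorithm $\mathcal{B}$ that breaks the existential unforgeability of the underlying quantum digital signature scheme cited in the block-production step~\cite{44,45,46}. Since those schemes are already proven secure against QPT adversaries under the postulates of quantum mechanics (chiefly the no-cloning theorem and measurement disturbance), the forgery probability must be negligible in $\kappa$.

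First, I would formalize the adversarial game along the lines of quantum EUF-CMA. The challenger runs the key-generation algorithm to obtain $(sk_i, pk_i)$ and publishes $pk_i$. The adversary $\mathcal{A}$ is granted access to a signing oracle that, on any classical query $M$, returns the corresponding quantum signature state $qs_i(M)$; this models the fact that in the blockchain network $\mathcal{A}$ can observe arbitrarily many broadcast tuples $(M, qs_i(M))$. After at most $q=\mathrm{poly}(\kappa)$ queries, $\mathcal{A}$ outputs a candidate forgery $(M', qs_i(M'))$ with $M'$ distinct from all queried messages, and the goal is to bound $\Pr[\mathsf{Verify}(pk_i, M', qs_i(M')) = 1]$.

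The core of the reduction is then to argue that any non-negligible forging advantage $\varepsilon(\kappa)$ contradicts the security of the underlying signature primitive. I would invoke two structural facts: (i) the secret key $sk_i$ is computationally hidden by $pk_i$, because extracting it would require inverting the one-way (typically quantum-hash-based) mapping from $sk_i$ to $pk_i$; and (ii) the no-cloning theorem prevents $\mathcal{A}$ from replicating previously observed signature states in a way that would allow a fresh message $M'$ to inherit a valid signature. Combining these, I would conclude that $\mathcal{B}$ can simulate $\mathcal{A}$'s environment using its own oracle, forward $\mathcal{A}$'s output as a forgery, and thereby succeed with the same probability $\varepsilon(\kappa)$, which must therefore be $\mathrm{negl}(\kappa)$.

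The main obstacle I expect is handling \emph{adaptive quantum chosen-message queries}, since a QPT adversary may in principle issue oracle queries in superposition, and the cited proofs are phrased in somewhat different attack models. I would address this by noting that the transaction-broadcast interface in our framework only accepts classical message descriptors $M_i$, so the adversary is effectively restricted to classical queries and the existing security proofs of~\cite{44,45,46} transfer directly. If one insists on the fully quantum query model, the argument can be strengthened by invoking the one-way-to-hiding lemma in the quantum random oracle, at the cost of a polynomial loss in $\varepsilon(\kappa)$, which preserves the negligibility conclusion and completes the proof.
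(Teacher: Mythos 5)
Your proposal is essentially sound but takes a genuinely different route from the paper. You argue by black-box reduction: any forger against the blockchain's verification interface yields an EUF-CMA forger against the underlying quantum signature primitive, whose security you then import from~\cite{44,45,46}. The paper instead opens up the specific Gottesman--Chuang construction: the public key consists of quantum states $|f_{k_{i,0}}\rangle, |f_{k_{i,1}}\rangle$ obtained from a quantum one-way function $f:\{0,1\}^L \to (\mathbb{C}^2)^{\otimes H}$, and Holevo's theorem bounds the classical information extractable from $n$ copies of these states by $nH$ bits, leaving $L-nH$ bits of residual uncertainty about each key string and giving the explicit forgery bound $2^{-(L-nH)}\cdot(2b)$. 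The paper's argument is therefore information-theoretic and quantitative (it tells you how to choose $L$ relative to $nH$ and the number of attempted forgeries $b$), whereas yours is asymptotic and modular. The trade-off: your reduction is cleaner and survives a change of signature primitive, but as applied to this theorem it is close to circular --- the statement \emph{is} essentially the unforgeability claim for the scheme node $N_i$ uses, so ``reduce to the security of the primitive'' defers the entire content of the proof to the citations, while the paper at least sketches why the primitive is secure. Two smaller mismatches worth noting: the Gottesman--Chuang guarantee is information-theoretic and parameterized by the number of distributed public-key copies (a resource bound, not a computational assumption), so your appeal to one-wayness as a computational-hiding property and your detour through the one-way-to-hiding lemma in the quantum random oracle model do not match the actual mechanism at work here; and the scheme signs single-bit messages with limited key reuse, so the standard many-query EUF-CMA game you set up does not directly apply without the copy-counting that the paper's $nH$ term performs.
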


\begin{proof} 
Each transaction is signed by a quantum digital signature scheme \cite{44} instantiated from a quantum one-way function $f: \{0,1\}^L \to (\mathbb{C}^2)^{\otimes H}$, which maps an $L$-bit string to an $H$-qubit quantum state. The private key of node $N_i$ is $sk_i = (k_{i,0}, k_{i,1})$, where each $k_{i,j}$ is sampled uniformly at random from $\{0,1\}^L$. The corresponding public key $pk_i = \{ |f_{k_{i,0}}\rangle, |f_{k_{i,1}}\rangle \}$ consists of quantum states that are distributed to other nodes for signature verification.

By Holevo’s theorem, if an adversary obtains at most $n$ copies of the public key states, they can extract at most $nH$ classical bits of information from each quantum state. Since the private key length is $L$ bits, the remaining uncertainty is $L-nH$ bits per string. Thus, the adversary’s success probability in recovering the full private key is bounded by:
\begin{equation}
\Pr[\mathcal{A} \text{ forges a valid signature}] \leq 2^{-(L-nH)}\cdot (2b),
\end{equation}
where $b$ is the total number of single-bit messages for which the adversary attempts to create forgeries. For sufficiently large $L \gg nH$, this probability is negligible in $\kappa$. Hence, the adversary cannot generate a valid quantum signature without knowing $sk_i$, and forging a transaction is infeasible.

This ensures that all accepted transactions are authentically generated by the claimed sender $N_i$. Transaction forgery, replay, or double-spending attempts by adversaries will be rejected except with negligible probability, guaranteeing correctness and integrity of the transaction ledger.
\end{proof} 

\subsubsection{Immutability of quantum blockchain} 
The immutability of our blockchain is underpinned by quantum walk evolution. Each walk originates from an initial position determined by the predecessor's hash and evolves according to the block's transaction data. This mechanism binds a block’s internal content to its position, such that any historical alteration results in a cascading validation failure, thereby ensuring immutability.

\begin{definition}
A blockchain $C = (B_1, B_2, \dots, B_N)$ is considered valid if, for every block $B_i$, it satisfies two conditions:

(1) The initial quantum walk positions ${|x^i_j\rangle}$ used in $B_i$ are correctly derived from the hash of the preceding block, i.e., $hv_{i-1} = \text{QFHs}(B_{i-1})$.

(2) The final quantum states ${|\psi^i_j(t^i_j)\rangle}$ in $B_i$ pass the backward evolution test using the step counts ${t^i_j}$ derived from the block’s transactions $M_i$. Concretely, for all $j=1,2,\dots,n$, it must hold that $U^{-t^i_j} |\psi^i_j(t^i_j)\rangle \mapsto |x^i_j\rangle$,
ensuring each state recovers its original position.
\end{definition}

\begin{theorem}
Let $C = (B_1, \dots, B_N)$ be a valid blockchain. For any QPT adversary $\mathcal{A}$, the probability of successfully modifying a confirmed block $B_k$ ($k < N$) into $B'_k$ such that the modified chain $C'=(B_1,\dots,B'_k,\dots,B_N)$ is accepted as valid by an honest verifier is negligible in the security parameter $\lambda$.
\end{theorem}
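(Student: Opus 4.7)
The plan is to reduce Theorem~2 to the collision (and second-preimage) resistance of the quantum walk-based hash function $\text{QFHs}$ introduced in Section~II.B.2. Since only $B_k$ is tampered with in the modified chain $C'$, the initial positions $\{|x^{k+1}_j\rangle\}$ stored in the untouched block $B_{k+1}$ remain fixed, whereas condition~(1) of Definition~2 applied to $B_{k+1}$ requires them to be derivable from $\text{QFHs}(B'_k)$. In the original valid chain these same positions were derived from $\text{QFHs}(B_k)$, so any successful attack forces the equality $\text{QFHs}(B'_k)=\text{QFHs}(B_k)$ with $B'_k\neq B_k$, i.e., a second-preimage for $\text{QFHs}$.

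First I would show that the adversary can always make $B'_k$ pass the two internal consistency checks: condition~(1) for $B'_k$ is inherited for free because the untouched $hv_{k-1}$ still determines its initial positions, and condition~(2) is satisfied by preparing $|\psi^{\prime k}_j(t^{\prime k}_j)\rangle = U^{t^{\prime k}_j}|x^k_j\rangle$ using the step counts that Eq.~(10) assigns to the replacement transactions $M'_k$. Consequently the intra-block validity of $B'_k$ imposes no cost on $\mathcal{A}$, and the whole attack collapses to producing a hash collision at the $B'_k \to B_{k+1}$ interface. Next I would invoke the collision resistance of $\text{QFHs}$: because the coin-operator schedule in the underlying quantum walk is bit-by-bit controlled by the message and the digest is produced by truncating modular reductions of the position probabilities, two distinct messages yield effectively independent output distributions, so forcing the same $L$-bit digest is intractable. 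Even allowing the Brassard--H\o yer--Tapp quantum speedup for collision search, any QPT adversary making $q=\mathrm{poly}(\lambda)$ queries succeeds with probability at most $O(q^{3}/2^{L})$, which is negligible in $\lambda$ provided $L$ is chosen linearly in $\lambda$.

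Assembling the pieces, $\Pr[\mathcal{A}\text{ produces an accepted }C'] \le \Pr[\text{QFHs second-preimage}] \le \mathrm{negl}(\lambda)$, which establishes the theorem. The main obstacle I expect is this hash-resistance step: unlike a generic random oracle, $\text{QFHs}$ is a concrete quantum walk construction, and rigorously bounding its second-preimage probability requires either appealing to the diffusion and sensitivity properties demonstrated in Ref.~\cite{42} or modeling the message-to-distribution map as a quasi-random function. I would therefore phrase the final bound as conditional on the QHF satisfying the standard notion of collision resistance against QPT adversaries, citing the avalanche, uniformity, and collision-frequency experiments already reported for this construction to justify the assumption.
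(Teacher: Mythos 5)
Your proposal is correct in outline but takes a genuinely different route from the paper, and in one respect it is sharper. The paper's proof splits acceptance into the same two checks you identify, but it argues that \emph{both} are individually hard to defeat: it claims that ``due to the non-trivial dynamics of quantum walks'' an adversary cannot construct states $|\psi'^k_j\rangle$ and step counts $t'^k_j$ that pass the backward-evolution test, and separately that the hash mismatch at the $B'_k \to B_{k+1}$ interface forces the adversary to regenerate every subsequent block, which is deemed infeasible against the honest network's cumulative power. You instead observe, correctly, that the internal consistency check costs the adversary nothing: since $hv_{k-1}$ and the walk operator $U$ are public, setting $|\psi'^k_j\rangle = U^{t'^k_j}(|x^k_j\rangle\otimes|c_0\rangle)$ passes the backward-evolution test by construction, so the paper's first prong does not actually contribute security against an adversary who re-prepares the block states rather than reusing the old ones. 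This collapses the entire theorem onto the external linkage, which you then reduce cleanly to second-preimage resistance of $\text{QFHs}$ (the paper only implicitly assumes $\text{QFHs}(B'_k)\neq \text{QFHs}(B_k)$ without naming the assumption). What your approach buys is an explicit, falsifiable hypothesis -- collision/second-preimage resistance of the quantum-walk hash against QPT adversaries -- in place of two informal infeasibility claims; what it costs is that the theorem becomes conditional on a property of $\text{QFHs}$ that neither you nor the paper proves, and which Ref.~\cite{42} supports only with empirical avalanche and collision-frequency tests. One minor refinement: since $B_k$ is fixed in advance, the relevant notion is second-preimage resistance, for which the generic quantum query bound is $O(q^2/2^L)$ rather than the $O(q^3/2^L)$ collision-search bound you quote; either way the conclusion is unaffected for $L=\Theta(\lambda)$.
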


\begin{proof}
Suppose an adversary $\mathcal{A}$ attempts to replace the original block $B_k$ with a fraudulent block $B^\prime_k$ containing modified transactions $M^\prime_k \neq M_k$. For the altered chain $C'$ to be valid, the adversary must simultaneously pass the following two independent checks:

{\it Internal consistency check.}  
The verifier computes the initial positions ${|x^k_j\rangle}$ from the hash of the previous block, $hv_{k-1} = \text{QFHs}(B_{k-1})$. They then perform the backward evolution test on $B'_k$ with its states ${|\psi'^k_j\rangle}$ and step counts ${t'^k_j}$. Correctness requires that
\begin{equation}
U^{-t'^k_j} |\psi'^k_j\rangle \mapsto |x^k_j\rangle, \quad \forall j.
\end{equation}
Since $M'_k \neq M_k$, there exists at least one index $j$ for which $t'^k_j \neq t^k_j$, and hence the backward evolution will fail to reproduce the original initial state $|x^k_j\rangle$. Due to the non-trivial dynamics of quantum walks, it is computationally infeasible for an adversary to construct alternative states $|\psi'^k_j\rangle$ and step counts $t'^k_j$ that collectively yield the prescribed initial positions. Consequently, $\mathcal{A}$ can pass the internal consistency check only with negligible probability.

{\it External linkage check.}  
Even if $\mathcal{A}$ miraculously succeeds in forging $B'_k$ that passes the internal check, the block hash values will differ, since
\begin{equation}
hv^\prime_k = \text{QFHs}(B^\prime_k) \neq hv_k = \text{QFHs}(B_k).
\end{equation}
The subsequent block $B_{k+1}$ was generated using $hv_k$ to define its initial positions ${|x^{k+1}_j\rangle}$. A verifier inspecting the link between $B'_k$ and $B_{k+1}$ would instead use $hv^\prime_k$, which does not match the actual initial positions in $B_{k+1}$. Thus, the linkage validation fails. To restore validity, the adversary would need to recompute not only $B_{k+1}$ but all subsequent blocks, effectively reconstructing the entire chain from $B_{k+1}$ to $B_N$. This task is infeasible without overwhelming the cumulative computational power of the honest network.

Since both internal consistency and external linkage must hold simultaneously, and each can be broken only with negligible probability, we conclude that
\begin{equation}
\Pr[\mathcal{A} \text{ forges a valid chain}] \leq \text{negl}(\lambda).
\end{equation}
Therefore, the immutability of this blockchain ledger is cryptographically guaranteed.
\end{proof}

\subsubsection{Fairness of the proposed QDPoS consensus} 
Our proposed QDPoS consensus protocol ensures fairness through a weighted quantum voting scheme,  where voters encode their choices into measurement results derived from Cat states and candidates aggregate votes without learning individual preferences.  Even an adversary  $\mathcal{A}$ with unbounded quantum power cannot amplify its influence beyond its legitimately assigned voting weight.

\begin{definition}
A consensus protocol is fair if the influence of any participant (including an adversary) on the final decision is strictly determined by its allocated voting weight, and cannot be amplified through computational power or external capabilities.
\end{definition}

\begin{theorem}
For any QPT adversary $\mathcal{A}$, its computational advantages, including access to a quantum computer, cannot be used to bias the outcome or compromise the fairness of the QDPoS consensus mechanism.
\end{theorem}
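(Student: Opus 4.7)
The plan is to decompose ``fairness'' into three concrete adversarial objectives and to rebut each by reducing it to an information-theoretic property of the Cat state, the decoy-state channel check, or the publicly auditable weight constraint. Because every reduction rests on quantum no-cloning, measurement disturbance, or the global correlations of $|\Phi(\vartheta_1,\dots,\vartheta_n)\rangle$, and not on any computational hardness assumption, a QPT adversary gains no advantage over an unbounded one, which is precisely what the theorem asks for. I partition the attack space as: (a) a corrupted voter $V_l$ casting more than its weight $\tilde{w}_l$ or impersonating another voter; (b) a corrupted candidate $C_k$ trying to learn individual preferences or to skew its own tally; and (c) an external eavesdropper on the quantum channels.

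First I would handle the vote-inflation and identity attack. Every voter's column $(r^{\prime k}_{0,l},\dots,r^{\prime k}_{n-1,l})$ is publicly committed in Step~5 of the aggregation phase, and the weight audit of Eq.~(19) binds the total votes attributed to $V_l$ to exactly $\tilde{w}_l$. Using the row-sum identity $\sum_l r^k_{g,l}\equiv 0 \pmod d$ of Eq.~(15) together with $d\ge T_v$, I plan to show that any dishonest strategy which boosts one $\mathrm{result}^{k}$ by $\Delta$ necessarily decreases another by the same $\Delta$ modulo $d$, with no wrap-around available to mask a genuine overspend; thus inflation is detected with certainty by at least one honest auditor. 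Impersonation is ruled out because each voter's private index $N^k_l$ is obtained from a local computational-basis measurement on a single qudit of a fresh Cat state, and the distinctness property of $|\Phi(0,\vartheta_2,\dots,\vartheta_n)\rangle$ makes $N^k_l$ independent of every other voter's outcome.

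Next I would address privacy and tally-skewing by a corrupted candidate. Only the row sums $\mathrm{result}^k_g$ are released; by the Fourier-basis property of the Cat state, any proper subset of the $r^k_{g,l}$ is uniformly distributed on $\mathbb{Z}_d$ conditioned on the row-sum constraint, so the candidate's view is statistically independent of the unopened votes. A parallel argument for the index phase shows that the assignment $V_l \mapsto N^k_l$ is a secret random permutation from $C_k$'s viewpoint. For a candidate that prepares malformed states in place of $|\Phi(0,\vartheta_2,\dots,\vartheta_n)\rangle$ or $|\Phi^\prime(0,\dots,0)\rangle$, the $\delta$ sacrificed copies implement a cut-and-choose check: a state producing per-copy bias $\varepsilon$ survives the test with probability at most $(1-\varepsilon)^{\delta}$, negligible in $\delta$. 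Eavesdropper (c) is defeated by the BB84-style decoy-state analysis, since any intercept-resend or entangle-measure strategy disturbs the mixed computational/Fourier decoy statistics by a constant per particle, making the detection probability exponentially close to $1$.

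The main obstacle I expect is the cut-and-choose analysis for the aggregation-phase Cat states, because, unlike the index phase, honest voters subsequently add $v^k_l$ to their columns, and one must show that a state passing the $\delta$-sample test still forces Eq.~(15) up to negligible error and, moreover, that this error propagates through the Fourier measurements and modular additions to a negligible bias on $\mathrm{result}^k_l - v^k_l$. My plan for this step is to bound the trace distance between the post-selected adversarial state and the ideal Cat state via a Serfling-type tail inequality on the verification rounds, and then to invoke monotonicity of trace distance under the voters' local CPTP operations to carry that bound all the way to the announced tally. Combining the three reductions then yields the fairness guarantee of Theorem~3, independent of $\mathcal{A}$'s computational resources.
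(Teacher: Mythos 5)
Your overall strategy---reducing ``fairness'' to information-theoretic properties of the Cat states, the decoy-state channel check, and the public weight audit, and then observing that none of these rests on a computational assumption so a QPT adversary gains nothing---matches the spirit and most of the substance of the paper's proof. The paper organizes the argument by attack type (vote forgery/amplification, then eavesdropping/tampering) rather than by corrupted role, and it is considerably less formal: it simply invokes the uniqueness of the private indices $N^k_l$, the self-audit $\sum_{k=1}^{m}\mathrm{result}^k_l=\tilde{w}_l$, the decoy-particle disturbance argument, and the algebraic correlations of Cat-state measurements, without Serfling bounds or trace-distance propagation. Your treatment of the corrupted candidate (privacy from the row-sum-conditioned uniformity of Fourier outcomes, cut-and-choose on the $\delta$ sacrificed copies) and of the external eavesdropper is sound and strictly more detailed than what the paper provides.

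There is, however, one concrete step that does not go through: your claim that the row-sum identity $\sum_l r^k_{g,l}\equiv 0\pmod d$ together with $d\ge T_v$ forces any boost of one $\mathrm{result}^k$ by $\Delta$ to be compensated by a decrease of $\Delta$ elsewhere, so that overspending is ``detected with certainty by at least one honest auditor.'' No such conservation law exists in the protocol. The ballot matrices for different candidates are built from entirely independent Cat states, so there is no cross-candidate constraint binding $\sum_k v^k_l$; and within a single candidate's matrix, the row-sum identity only guarantees that $\mathrm{result}^k_{N^k_l}$ equals whatever increment $v^k_l\in\{0,\dots,d-1\}$ the voter chose to add at their own row---nothing prevents a dishonest $V_l$ from adding $v^k_l>\tilde{w}_l$ for every $k$ simultaneously. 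The only mechanism that ties a voter's total to $\tilde{w}_l$ is the audit of Eq.~(19), and because the indices $N^k_l$ are private, only $V_l$ can perform that audit; a cheating voter will simply not report its failure. The paper's own proof relies on exactly this self-audit and claims only that an honest $V_l$ can verify its ballot was included---it does not claim that honest parties detect another voter's inflation. If you want to close this for a malicious voter, you need an additional argument (or an acknowledgment that the guarantee is of the weaker, paper-style form); as written, this sub-step overclaims.
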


\begin{proof}
We prove this theorem by analyzing an adversary $\mathcal{A}$'s potential attack  across the critical phases. We will show that each potential attack is rendered ineffective.

{\it Resistance to vote forgery and amplification.} An adversary's most direct path to subverting fairness is to cast more votes than their stake allows.
In our voting protocol, each voter $V_l$ is assigned a fixed voting weight $\tilde{w}_l$, and receives a unique index $N^k_l$ from candidate $C_k$. After that, $V_l$ casts $v^k_l \in \{0,1,\dots,\tilde{w}_l\}$ votes at the position indexed by $N^k_l$. Candidate $C_k$ then aggregates the submitted vectors by computing their component-wise sum. Owing to the uniqueness of indices, the entry corresponding to $N^k_l$ in the aggregated result encodes the contribution of voter $V_l$. Finally, the candidates announce their respective statistical results, and $V_l$ checks whether $\sum^m_{k=1} result^k_l=\tilde{w}_l$. If the condition holds, $V_l$ can be assured that their ballot has been faithfully included in the final tally, thereby preventing any adversary from forging additional votes or amplifying their influence.

{\it  Resistance to eavesdropping and vote tampering.} A typical attack by an adversary $\mathcal{A}$ involves the interception and tampering of votes from honest participants during transmission over the quantum channel, which may occur either in the private index distribution phase or the secure vote aggregation phase. To counter such attacks, our protocol incorporates decoy states into the quantum state transmission. Any attempt to perform a non-orthogonal measurement will perturb these quantum states with a high probability, leading to the detection of the attack during the verification step. Furthermore, the protocol leverages the properties of Cat states to guarantee vote security. Specifically, measurements on Cat states under different bases exhibit deterministic algebraic correlations. Any tampering, no matter how sophisticated, will disrupt this delicate correlational structure and be therefore detectable.

The vote tallies are secured against both forgery and tampering. As a result, the representative selection, which relies on these tallies through a deterministic public computation, is also protected from manipulation. Throughout all phases, the influence of an adversary $\mathcal{A}$ remains strictly confined to its legitimate stake. The probability that $\mathcal{A}$ can amplify its influence beyond its allocated weight is negligible. Hence, the fairness of the proposed QDPoS consensus is guaranteed.
\end{proof}

\subsection{Simulation} 
To validate the feasibility of our proposed quantum blockchain framework, we conducted circuit simulations using IBM Qiskit, focusing on two core quantum components: (1) the weighted quantum voting consensus, and (2) the quantum walk-based block construction.

\subsubsection{Simulation of weighted quantum voting consensus}  This section focuses on validating the weighted quantum voting  protocol, which is the core of our consensus mechanism. To illustrate the process, we consider a simplified scenario with four voters ${V_0, V_1, V_2, V_3}$ participating in an election to select one representative from two candidates ${C_1, C_2}$. Each voter’s voting power is proportional to its stake. For example, assigning weights $\{0.3, 0.3, 0.2, 0.2\}$ to voters ${V_0, V_1, V_2, V_3}$ and normalizing to 10 discrete vote units yields allocations of 3, 3, 2, and 2 votes, respectively.
\begin{figure}[h]
    \centering
    \includegraphics[width=3.2in]{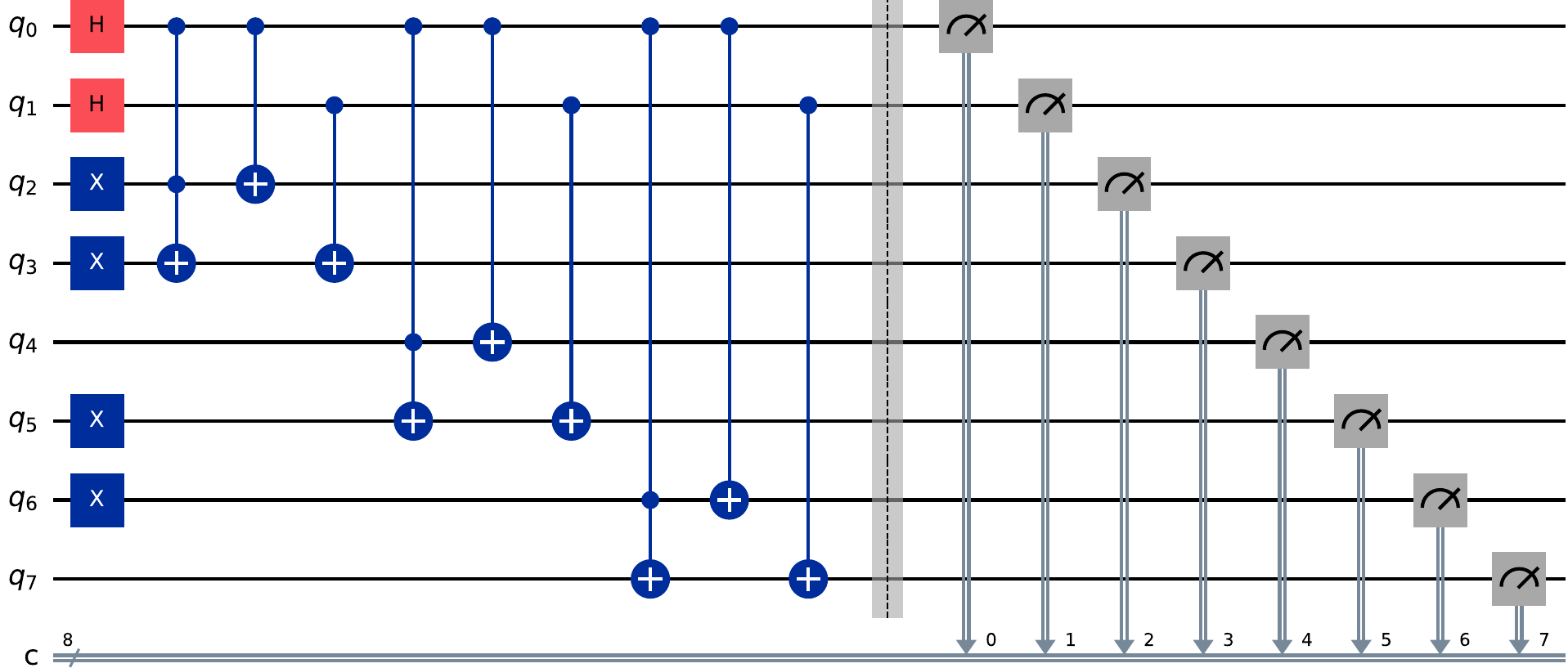}
    \caption{Schematic of the privacy index distribution circuit. The registers ($q_0,q_1,\dots,q_7$) are grouped in pairs, with each pair representing a 4-level particle.}
\end{figure}

According to the proposed scheme and simulation setup, candidates $C_1$ and $C_2$ are required to prepare 4-level 4-particle Cat states and distribute them to four voters ${V_0, V_1, V_2, V_3}$, thereby establishing the corresponding voting privacy indices. Subsequently, $C_1$ and $C_2$ further prepare Cat states for the aggregation of the final voting information. The corresponding quantum circuits are shown in Fig. 6 and Fig. 7, where Fig. 6 illustrates the distribution of privacy indices and Fig. 7 presents the aggregation of voting privacy. Specifically, we assume that the 4-level 4-particle Cat state used for distribution is 
\begin{equation}
|\Phi(0,3,2,1)\rangle=\frac{1}{\sqrt 4}\sum^{3}_{l=0}|l,l+3,l+2,l+1\rangle, 
\end{equation}
where the measurement outcomes of voters ${V_0, V_1, V_2, V_3}$ on their respective particles determine their privacy indices. For the aggregation phase, we assume that the 4-level 4-particle Cat state is initialized as
\begin{equation}
|\Phi^\prime(0,0,0,0)\rangle=\frac{1}{\sqrt 4}\sum^{4}_{l^\prime=0}|l^\prime,l^\prime,l^\prime,l^\prime \rangle, 
\end{equation}
upon which voters ${V_0, V_1, V_2, V_3}$ perform Fourier-based measurements to jointly encode their voting information.
\begin{figure}[h]
    \centering
    \includegraphics[width=3.2in]{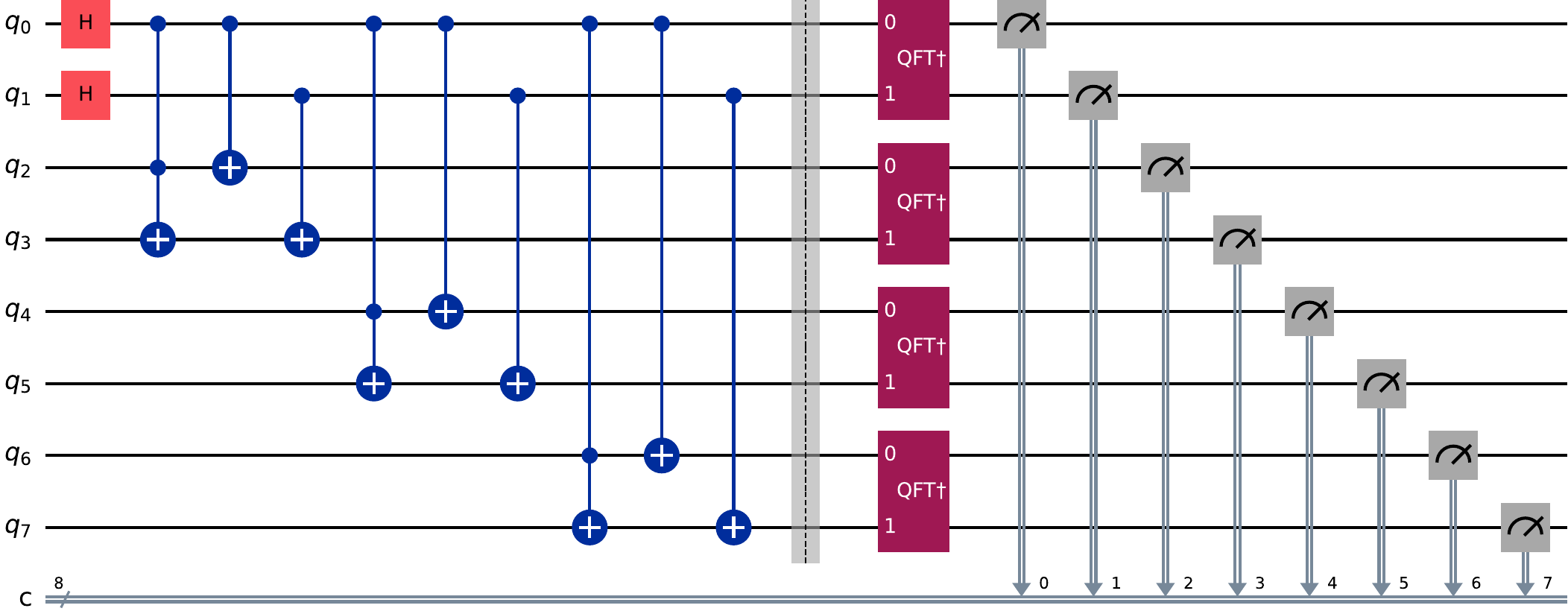}
    \caption{Quantum circuit for voting aggregation, where each voter's Fourier-basis measurements collectively satisfy a modulo‑4 constraint of zero. }
\end{figure}

The simulation results of these processes are presented in Fig. 8, including the obtained privacy indices of each voter and the corresponding ballot matrix derived from the Fourier-based measurements.

\begin{figure}[h]
\centering
\begin{subfigure}{0.48\columnwidth}
\centering
\includegraphics[width=\linewidth,height=1.2in]{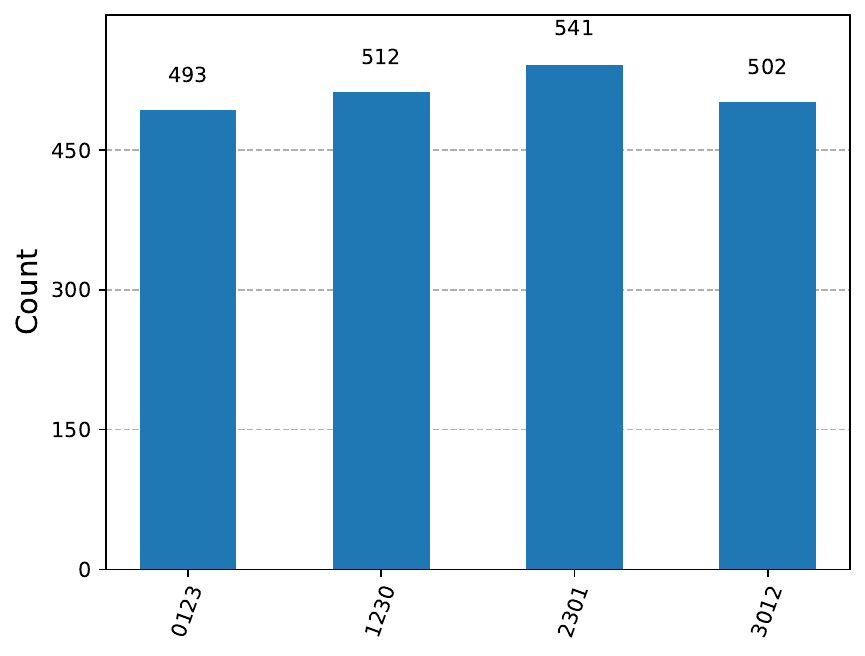}
\caption{ }
\label{fig:fig1}
\end{subfigure}
\hfill
\begin{subfigure}{0.48\columnwidth}
\centering
\includegraphics[width=\linewidth, height=1.2in]{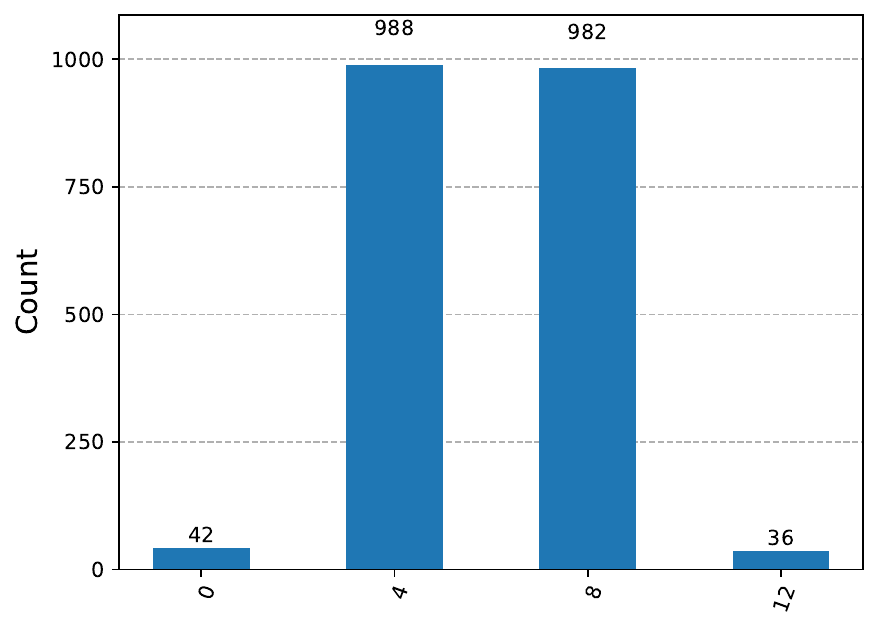}
\caption{ }
\label{fig:fig2}
\end{subfigure}
\caption{Simulation results corresponding to Figs. 6 and 7. Fig. 8(a) shows each voter's privacy index, and Fig. 8(b) shows the sum of the voters' Fourier-basis measurements.}
\end{figure}

Following the circuit constructions, we now illustrate a concrete example of the voting process.
Assume that the four voters cast their votes for candidates $C_1$ and $C_2$ as
\begin{equation}
\begin{aligned}
&(v^1_0,v^1_1,v^1_2,v^1_3)=(2,1,0,1), \\
& (v^2_0,v^2_1,v^2_2,v^2_3)=(1,2,2,1).
\end{aligned}
\end{equation}
The distributed privacy indices are set as $\{1,2,3,0\}$ for $C_1$ and $\{3,0,1,2\}$ for $C_2$. During the aggregation phase, the ballot matrices $r^1_{g,l}$ and $r^2_{g,l}$ are obtained as
\begin{equation}
\small
r^1_{g,l}=
\begin{pmatrix}
0 & 2  &  0 & 2 \\
3 & 0  &  2 & 3 \\
1 & 3  &  1 & 3 \\
2 & 0  &  2 & 0 
\end{pmatrix},
\quad 
r^2_{g,l}=
\begin{pmatrix}
1 & 1  &  1 & 1 \\
2 & 3  &  0 & 3 \\
1 & 3  &  2 & 2 \\
2 & 3  &  1 & 2 
\end{pmatrix},
\end{equation}
where each row sums to 0 modulo 4. The overall voting procedure is summarized in Table I, where the final tallies are publicly announced, enabling each voter to verify that their ballot has been faithfully included. 
\begin{table}[htbp]
\centering
\footnotesize \caption{\small Example of the voting process with privacy indices}
\begin{tabular}{cccccc}
\toprule
\textbf{ } & \textbf{$V_0$} & \textbf{$V_1$} & \textbf{$V_2$} & \textbf{$V_3$} & \textbf{Result} \\
\midrule
$C_1$ & 2 & 1 & 0 & 1 & \textbf{4} \\
\noalign{\vspace{2pt}} 
$r^{\prime1}_{0,l}$ & 0 & 2 & 0 & 2+1 & 1 \\
\noalign{\vspace{2pt}} 
$r^{\prime1}_{1,l}$ & 3+2 & 0 & 2 & 3 & 2 \\
\noalign{\vspace{2pt}} 
$r^{\prime1}_{2,l}$ & 1 & 3+1 & 1 & 3 & 1 \\
\noalign{\vspace{2pt}} 
$r^{\prime1}_{3,l}$ & 2 & 0 & 2+0 & 0 & 0 \\
\midrule
$C_2$ & 1 & 2 & 2 & 1 & \textbf{6} \\
\noalign{\vspace{2pt}} 
$r^{\prime2}_{0,l}$ & 1 & 1+2 & 1 & 1 & 2 \\
\noalign{\vspace{2pt}} 
$r^{\prime2}_{1,l}$ & 2 & 3 & 0+2 & 3 & 2 \\
\noalign{\vspace{2pt}} 
$r^{\prime2}_{2,l}$ & 1 & 3 & 2 & 2+1 & 1 \\
\noalign{\vspace{2pt}} 
$r^{\prime2}_{3,l}$ & 2+1 & 3 & 1 & 2 & 1 \\
\bottomrule
\end{tabular}%
\end{table}


\begin{figure*}[!htbp]
\centering
\begin{subfigure}[b]{0.5\textwidth}
    \centering
    \includegraphics[width=\textwidth]{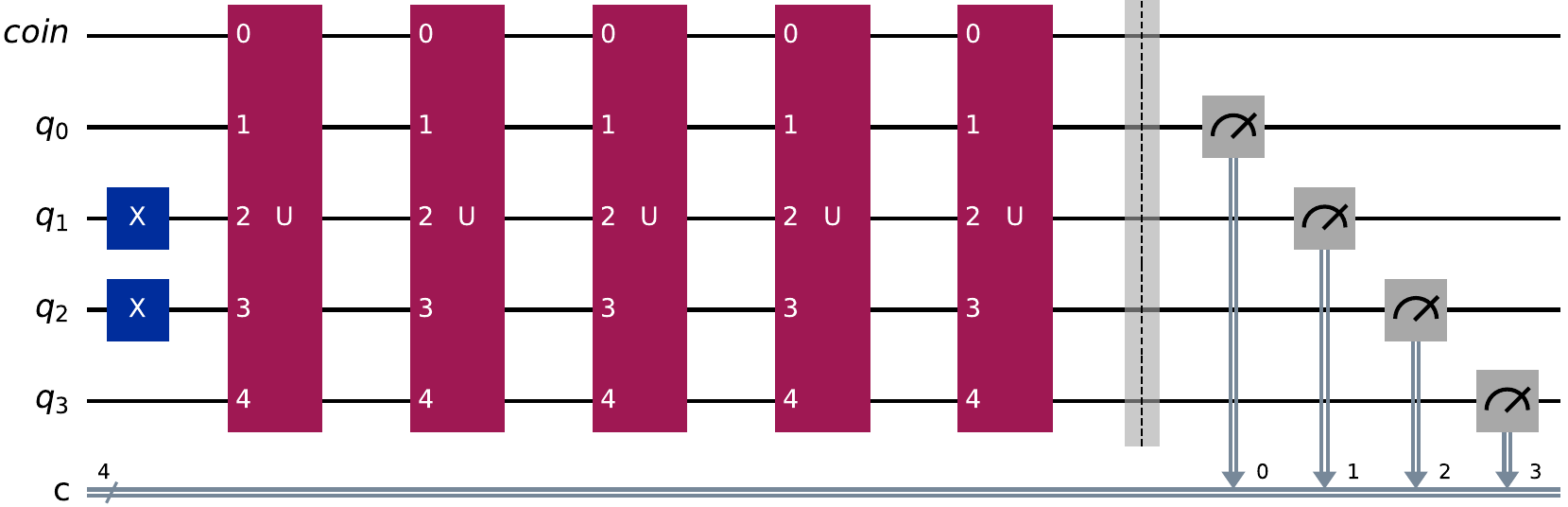}
    \caption{ }
    \label{fig:subfig1}
\end{subfigure}
\hfill
\begin{subfigure}[b]{0.45\textwidth}
    \centering
    \includegraphics[width=\textwidth]{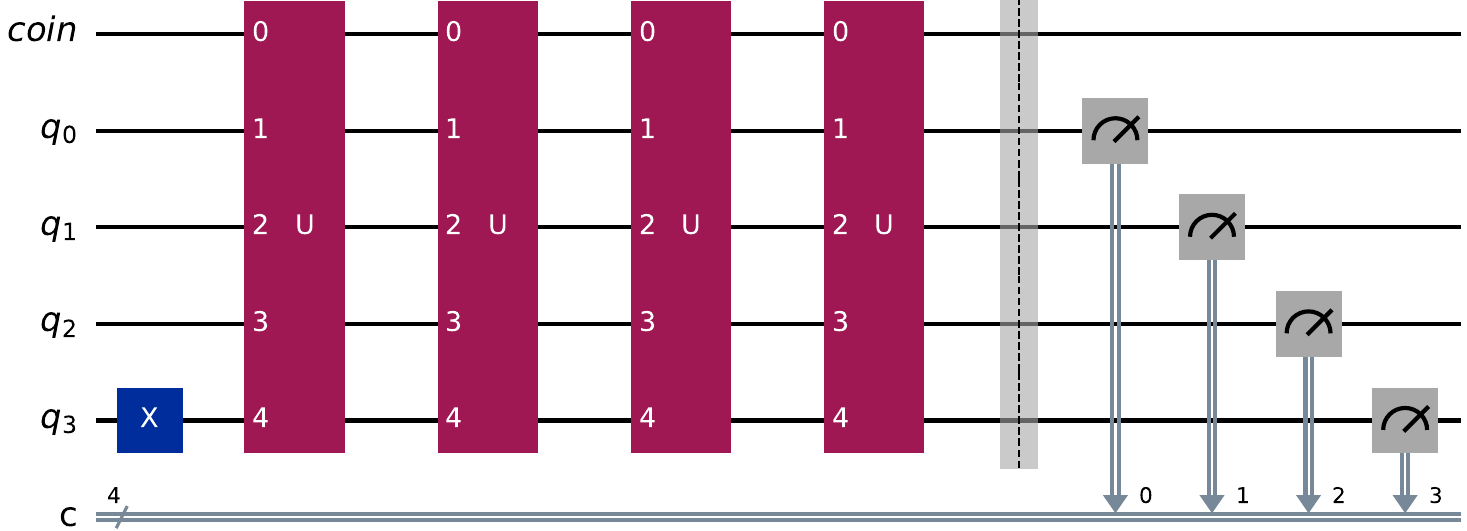}
    \caption{ }
    \label{fig:subfig2}
\end{subfigure}

\vspace{0.2cm}

\begin{subfigure}[b]{0.24\textwidth}
    \centering
    \includegraphics[width=\textwidth]{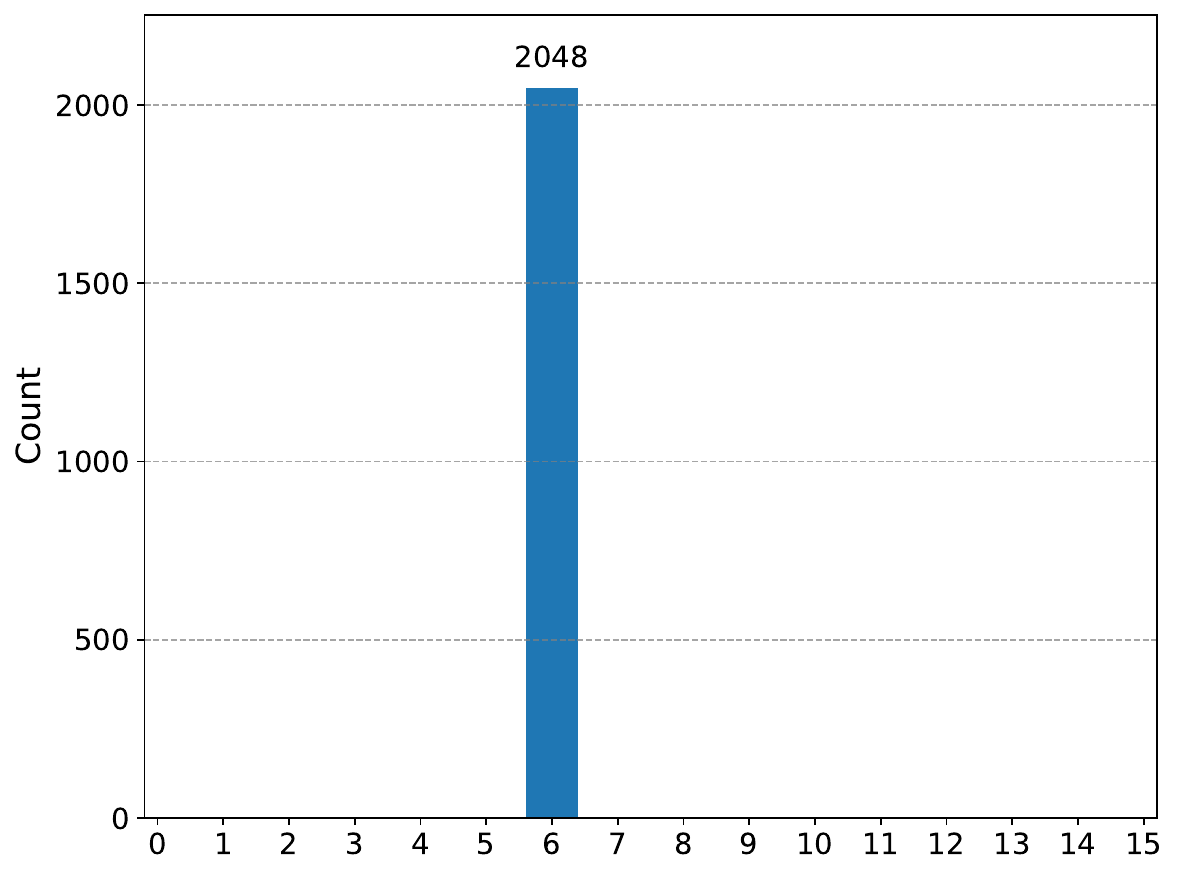}
    \caption{ }
    \label{fig:subfig3}
\end{subfigure}
\hfill
\begin{subfigure}[b]{0.24\textwidth}
    \centering
    \includegraphics[width=\textwidth]{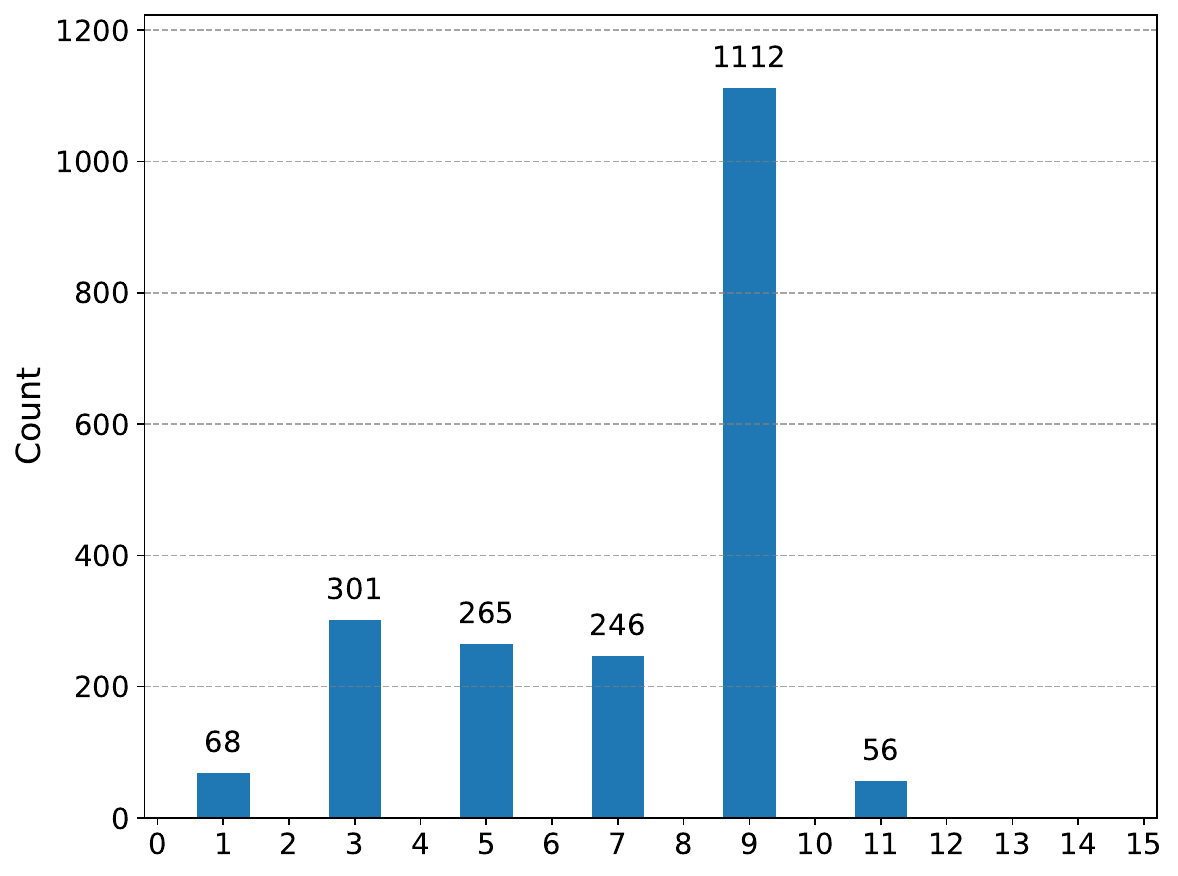}
    \caption{ }
    \label{fig:subfig4}
\end{subfigure}
\hfill
\begin{subfigure}[b]{0.24\textwidth}
    \centering
    \includegraphics[width=\textwidth]{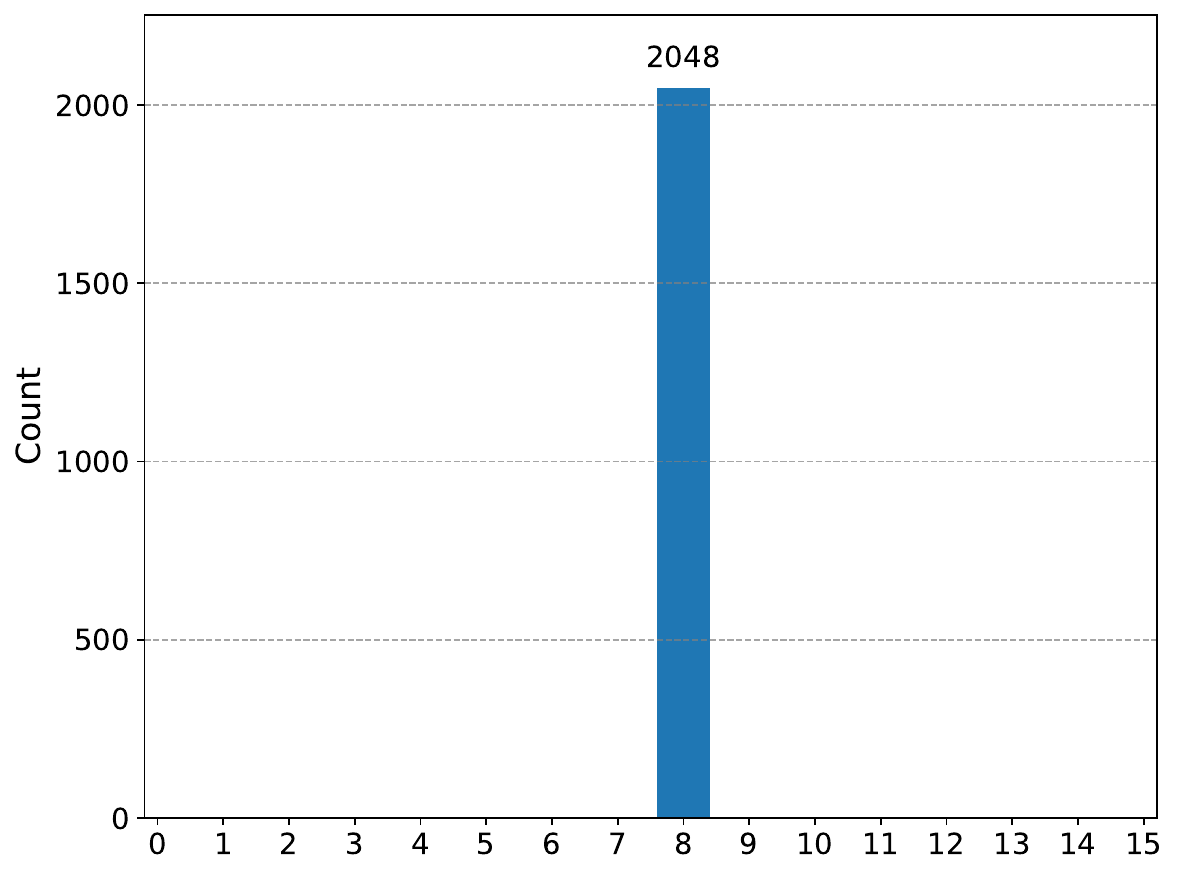}
    \caption{ }
    \label{fig:subfig5}
\end{subfigure}
\hfill
\begin{subfigure}[b]{0.24\textwidth}
    \centering
    \includegraphics[width=\textwidth]{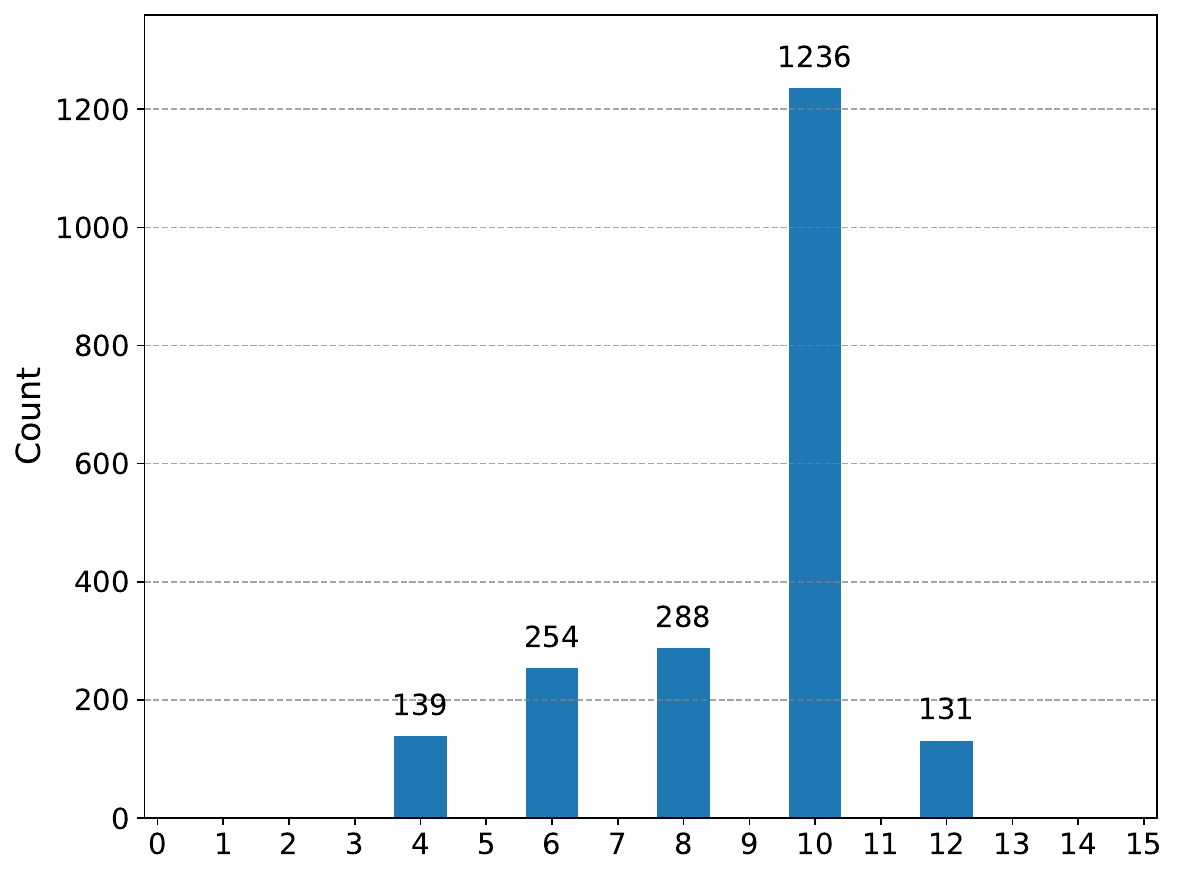}
    \caption{ }
    \label{fig:subfig6}
\end{subfigure}

\caption{Quantum walk circuits and simulation results for block construction. This figure illustrates the evolution of two initial states. The left panels depict the 5-step evolution ($U^5$) of a walker starting at position $|6\rangle$: (a) shows the implementing circuit, while (c) and (d) show the initial and final probability distributions. The right panels show the 4-step evolution ($U^4$) of a walker starting at $|8\rangle$: (b) is the corresponding circuit, and (e) and (f) are the distributions before and after the walk.}
\label{fig:main}
\end{figure*}

For instance, voter $V_1$ has privacy indices 2 and 0 with a total voting weight of 3. By directly inspecting Table I, $V_1$ can confirm that the aggregated tally of their votes is indeed 3, ensuring the verifiability of the process. In this case, candidate $C_2$ receives 6 votes, while candidate $C_1$ receives 4 votes, making $C_2$ the elected representative node.

\subsubsection{Simulation of quantum walk-based block construction} In this part, we simulate the construction of quantum blocks based on discrete-time quantum walks. Taking the $i$-th block as an example, the initial position state of the quantum walk is determined by the hash value $hv_{i-1}$ of the previous block, while the number of walk steps $t^i_j$ encodes the transactions of the current block.

To simplify the simulation, we utilize a 16-dimensional position space ${|0\rangle, \dots, |15\rangle}$, realized with a 4-qubit register. Then, the previous block's hash seeds two independent initial walker states, $|x^i_1\rangle \otimes |c_0\rangle$ and $|x^i_2\rangle \otimes |c_0\rangle$. For this scenario, we assume the hash maps to initial positions $|x^i_1\rangle = |6\rangle$ and $|x^i_2\rangle = |8\rangle$, while the coin is initialized to $|c_0\rangle = |0\rangle$. The current block's transactions are encoded into the dynamics of the walk. We consider a block with two transactions that dictate the number of evolution steps, $t^i_1 = 5$ and $t^i_2 = 4$. 

Thus, the construction of the $i$-th block can be simulated by applying the quantum walk operator $U$ for the specified number of steps, i.e., $U^{5}$ acting on $|6\rangle\otimes|0\rangle$ and $U^{4}$ acting on $|8\rangle\otimes|0\rangle$. The resulting block is represented by the pair of evolved states that encode both the linkage to the previous block’s hash and the transactions of the current block. The corresponding circuits and simulation results are shown in Fig. 9. The detailed circuit construction of the evolution operator $U$ and its inverse $U^\dagger$ is provided in Fig. 10.
\begin{figure}[h]
\centering
\begin{subfigure}{0.7\columnwidth}
\centering
\includegraphics[width=\linewidth]{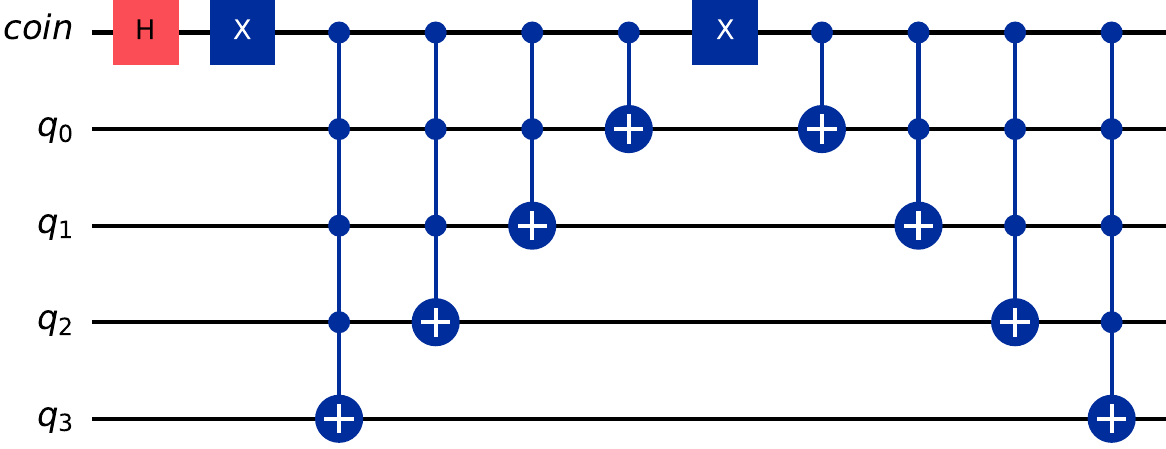}
\caption{ Circuit for $U$}
\label{fig:fig1}
\end{subfigure}
\hfill
\begin{subfigure}{0.7\columnwidth}
\centering
\includegraphics[width=\linewidth]{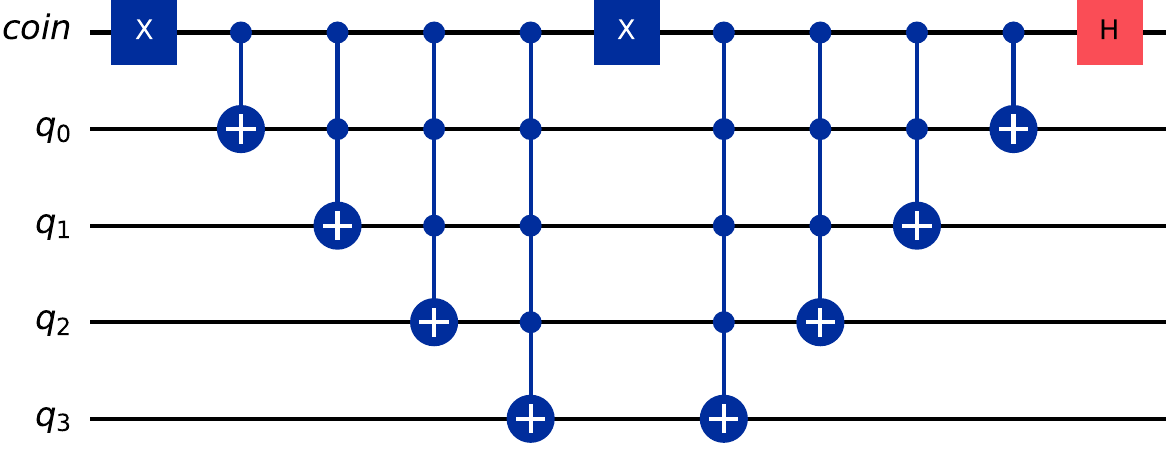}
\caption{Circuit for $U^\dagger$}
\label{fig:fig2}
\end{subfigure}
\caption{Quantum circuit for the evolution operator $U$ and $U^\dagger$.}
\label{fig:side_by_side}
\end{figure}


The simulation results in Fig. 9 demonstrate that the final state's probability distribution is highly sensitive to both the initial position (derived from the previous block’s hash) and the number of walk steps (encoding the current block’s transactions). As illustrated in Figs. 9(d) and 9(f), even minor changes in the initial conditions or evolution dynamics cause drastic alterations in the final distribution, including the locations and amplitudes of probability peaks. This high sensitivity forms the basis of our scheme's tamper-resistance, as any unauthorized modification would result in a verifiably incorrect final state.

To complete the verification, block integrity is checked via inverse evolution. For a valid block, such as the one in Fig. 9(d) generated from initial state $|6\rangle$ with 5 walk steps ($U^5$), applying the inverse operator $(U^\dagger)^5$ deterministically restores the system to $|6\rangle|c_0\rangle$. A position measurement then yields $|6\rangle$, confirming the block. However, if the block's data has been altered (i.e., a different initial state or number of steps was used), applying the correct inverse operation $(U^\dagger)^5$ will fail to return the system to the expected initial state. Consequently, the measurement outcome will not be $|6\rangle$, revealing the tampering.

\begin{table*}[!htp]
\centering 
\caption{Comparisons between other quantum blockchain protocols }
\label{tab:1}       
\begin{tabular}{ccccccc}
\hline
\noalign{\smallskip}
    &   Structure of     &   Block linking           &    Immutability          &  Consensus    &   Consensus time &  Byzantine \\ 
    &    the chain   &    Method                           &   of  nodes        &  mechanism    &   complexity   & fault tolerance \\
\noalign{\smallskip} \hline \noalign{\smallskip} 
 Ref. \cite{15}   &   Classical chain   &   Hash function       &  No    &   Byzantine agreement &   $O(n^{f+1})$ & Yes ($\frac{n}{3}$) \\
 \noalign{\smallskip} 
 Ref. \cite{16}   &   GHZ state    &   Entanglement        &  No    &   $\theta$-protocol &   $O(n^2)$ &   No ($0$)\\
 \noalign{\smallskip} 
 Ref. \cite{19}   &   Weighted hypergraph    &   Entanglement       &  N/A    &  Relative phase &    $O(n)$ &  No ($0$) \\
                  &    states    &                     &                           &  consensus &          &     \\
 \noalign{\smallskip} 
 Ref. \cite{26}   &  Weighted hypergraph     &   Entanglement         &  Yes    &    Quantum  voting     &    $O(n)$ &   Yes ($\frac{n}{2}$)\\
                  &   or graph states  &                              &         &     protocol         &     &\\
 \noalign{\smallskip} 
 Our scheme  &    Quantum walks   &   Walk evolution         &   Yes          &   Weighted quantum &    $O(n)$ & Yes ($\frac{n}{2}$)\\
             &                    &                            &                &   voting protocol  &           &\\
\hline
\noalign{\smallskip}
\end{tabular}
\begin{tablenotes}
\item\ \ \  Note: $f$ denotes the number of faulty nodes.
\end{tablenotes}
\end{table*}
In summary, our simulations confirm the practical feasibility of the proposed quantum blockchain framework by validating its two core components. The results verify the correctness of the weighted quantum voting consensus and demonstrate an inherent tamper-detection mechanism in block construction based on quantum walks.

\section{Comparisons and discussion} 
This section presents a comparative analysis between our proposed scheme and representative quantum blockchain approaches. Although only a few of quantum blockchain frameworks have been introduced so far, the works in Refs. \cite{15,16,19,26} serve as benchmarks for evaluation. The key distinctions between these schemes and our proposal are summarized in Table II, followed by a detailed discussion of their structural and functional differences.

Firstly,  regarding block construction, existing quantum blockchain approaches~\cite{16,19,26} typically rely on multi-particle entangled states, such as weighted graph state or hypergraph state, to encode block information. The integrity of the entire chain thus depends on preserving global entanglement across all blocks. While such designs provide strong theoretical correlations, they face severe practical challenges: as the number of blocks $N$ increases, the resources required to prepare and protect the corresponding $N$-particle entangled states against decoherence scale exponentially, resulting in a significant scalability bottleneck. In contrast, our approach employs quantum walks to sequentially encode block data. Each new block is generated through local operations on the current state vector, avoiding the need for persistent long-range entanglement. The inherent sensitivity of quantum walks to initial conditions and evolution steps ensures block integrity,  and it enables tampering detection through inverse evolution. This design effectively removes the reliance on fragile entanglement, significantly enhancing both scalability and feasibility

Secondly, the proposed scheme introduces a weighted quantum voting mechanism for achieving consensus among nodes. In our design, the voting power of each participant is proportional to its assigned weight (e.g., stake or reputation), which reflects the characteristics of nodes in actual blockchain systems.
This weighted structure overcomes the uniformity constraints of symmetric voting, making decisions more realistic and fair. Our weighted voting protocol allows each voter to independently verify the correct calculation of their vote based on publicly available information, ensuring the fairness of the voting process. Furthermore, similar to the voting-based consensus in Ref. \cite{26}, our scheme elects representative nodes through quantum voting with a time complexity of $O(n)$, ensuring efficient blockchain operation over a period of time. Regarding Byzantine fault tolerance (BFT), our scheme and that of \cite{15, 26} provide BFT, whereas \cite{16} and \cite{19} do not. The time complexity and BFT rates of other schemes are summarized in Table II.

Thirdly, to evaluate the feasibility of the proposed quantum blockchain, we conducted simulations of two core components: quantum block construction and weighted quantum voting. The results confirm correctness, feasibility, and inherent tamper resistance. Specifically, the immutability of a block is verified by applying the inverse of its corresponding quantum walk operator. Due to the unitarity of quantum mechanics, if a block's state is tampered with, the reverse evolution will fail to restore it to the original state. Compared to classical blockchain schemes, which rely on the computational difficulty of finding hash collisions, our approach provides a more fundamental guarantee of block security. Furthermore, in contrast to previous entanglement-based schemes that require verifying the integrity of a large graph state after layer-by-layer unitary operations, our method is more direct and efficient.

In summary, our framework replaces fragile multipartite entanglement with quantum walks as the core mechanism for block construction and chain integrity. This design avoids the scalability bottlenecks of entanglement-based schemes. In addition, the proposed weighted quantum voting consensus more faithfully reflects the diversity among nodes, enabling fairer and more efficient decision-making. The proposed framework provides a secure, scalable, and feasible foundation for quantum blockchain systems.

\section{Conclusion}
In this work, we propose a quantum blockchain framework based on quantum walks, replacing the conventional entanglement-dependent structure, thereby  overcoming the scalability bottlenecks tied to maintaining long-range entanglement.  To link blocks and embed data, the framework maps the previous block's hash to the walk's initial condition and the current block's transactions to the evolution steps. Furthermore, we present a weighted quantum voting consensus mechanism that better captures the diversity of node weights, enabling more fair and efficient decision-making. Finally, circuit-level simulations are conducted to validate the correctness and feasibility of the proposed scheme. Overall, this framework provides a scalable alternative to entanglement-based quantum blockchains and offers new perspectives for the design of future quantum blockchains.
%

%

\section*{Acknowledgment}
This work was supported in part by the National Natural Science Foundation of China under Grant Nos. 62501523 and 61871347,
as well as the Zhejiang Province Selected Funding for Postdoctoral Research Projects.

\ifCLASSOPTIONcaptionsoff
  \newpage
\fi






\end{document}